\newtheorem{theorem}{Theorem}[section]
\newtheorem{proposition}[theorem]{Proposition}
\theoremstyle{remark}
\newcommand{\be}{\begin{equation}}
\newcommand{\ee}{\end{equation}}
\newcommand{\erf}{\textrm{erf}}
\newcommand{\R}{\mathbb{R}}
\newcommand{\x}{\mathbf{x}}
\newcommand{\e}{\epsilon}
\newcommand{\rev}{\textcolor{black}}
\newcommand{\fer}[1]{(\ref{#1})}
\begin{document}
\title{Fokker-Planck modeling of many-agent systems in swarm manufacturing: asymptotic analysis and numerical results}

\author[1,3]{Ferdinando Auricchio \thanks{\texttt{ferdinando.auricchio@unipv.it}}} 
\author[2,3]{Giuseppe Toscani \thanks{\texttt{giuseppe.toscani@unipv.it}}}
\author[2]{Mattia Zanella \thanks{\texttt{mattia.zanella@unipv.it}}}

\affil[1]{Department of Civil Engineering and Architecture, University of Pavia, Italy} 
\affil[2]{Department of Mathematics ``F. Casorati'', University of Pavia, Italy} 
\affil[3]{IMATI "E. Magenes", National Research Council (CNR), Pavia, Italy}

\date{}

\maketitle

\begin{abstract}
In this paper we study a novel Fokker-Planck-type model that is designed to mimic manufacturing processes through the dynamics characterizing a large set of agents. In particular, we describe a many-agent system interacting with a target domain in such a way that each agent/particle is attracted by the center of mass of the target domain with the aim to uniformly cover this zone. To this end, we first introduce a mean-field model with discontinuous flux whose large time behavior is such that the steady state is globally continuous and uniform over a connected portion of the domain. We prove that a diffusion coefficient guaranteeing that a a given portion of mass enters in the target domain exists and that it is unique. Furthermore, convergence to equilibrium in 1D is provided through a reformulation of the initial problem involving a nonconstant diffusion function. The extension to 2D is explored numerically by means of recently introduced structure preserving methods for Fokker-Planck equations. 
\medskip

\noindent{\bf Keywords:} swarm robotics, swarm manufacturing, multi-agent systems, Fokker-Planck equations\\

\end{abstract}

\tableofcontents
\section{Introduction}

Manufacturing consists in goods transformation and industrial equipment production, generating 16\% of the world's-GDP (gross domestic product) and almost 15\% of the European GDP \footnote{World Bank national accounts data, and OECD National Accounts data files. Accessed 29 Mar 2022 }. However, despite the huge effort in digitalisation (e.g., industry 4.0), today's manufacturing is still a sequential, human-in-the-loop process, optimised just for well-defined conditions and well-established supply chains. The resulting process is neither resilient nor robust for extreme or unforeseeable environmental conditions.

Along this line of thinking Additive Manufacturing (AM) technologies have introduced significant flexibility in production processes, also establishing the concept of nearly unconstrained design freedom; however, AM still have with limitations, in particular  in the presence of unforeseeable conditions or with respect to production of large components, which are still out of reach.

Swarm manufacturing may then represents an even more significant change in perspective and an exciting field that might overcome limitations of the currently available AM technologies by means of highly flexible and robust solutions. The core idea beneath swarm manufacturing is to combine traditional robotic tools with swarm intelligence, extending manufacturing towards more distributed, hence more robust, processes. The potential applications of such a technology are \rev{manifold}, from aerospace to naval industry, from civil engineering to space constructions; as an example, this technology can be used to produce large structural parts, too complex or even impossible to realise by means of standard processes. 
The application in industrial processes of principles of self-organizing systems represent a very fascinating field towards the practical implementation of swarm robotics, where a swarm of manufacturing interacting agents are designed for the production of complex components, see e.g. \cite{A,CH,H,HW}. Among the main \rev{advantage} of this new approach, the flexibility and robustness in production processes constitute a key \rev{features} when dealing with  complex tasks subject to unpredictable changes.  Indeed, the swarms are typical examples of distributed intelligence where possible defections of part of the swarm does not weaken the assigned control protocols. Furthermore, they have recognized potential for the in-site construction in remote or inhospitable areas, in which traditional equipments cannot be carried in \cite{Ac}. Recently, several research teams are trying to apply the paradigms of swarming to 3D printing technologies, see e.g. \cite{Ox}. However, the design and the control of a swarm of moving, autonomous robots is not a trivial task. 
Therefore, models can help the designer in predicting the swarm behaviour given a predefined environment and a set of process constraints.
Accordingly, the development of effective models is of fundamental importance.

In recent years, we witnessed an explosion of new applications of kinetic and mean-field equations describing the collective behavior of large particles' systems. Particular attention has been paid to self-organizing systems and their emerging patterns in  biology, social sciences, traffic dynamics and robotics. Without intending to review the huge literature on these topics, we point the \rev{readers} to \cite{C,CPZ,CS,DOB,GHR,V}. In these works Langevin-type equations are designed to produce several patterns stemming from interaction forces. Typical examples are synchronization, milling, flocking and swarming behavior and their shape can be suitably modified through external controls \cite{CFPT,FPR}.

In particular, Vlasov-Fokker-Planck-type equations have been developed to represent the evolution of the particles' distribution. These equations can be derived from microscopic particle dynamics in the limit of a large number of agents, see e.g. \cite{AP,CFRT,CFTV,DM,HT,MT,PT,PTZ} and the references therein. Indeed, it is worth to mention that microscopic modelling approaches for interacting systems suffer from the so-called curse of dimensionality, becoming rapidly unaffordable when the number of agents becomes large. On the other hand, mean-field and kinetic models represent a very effective strategy for the dynamics of aggregate quantities.  

In the present work, we concentrate on  Fokker-Planck-type models suitable to describe the action of a large swarm on a fixed domain $D \subseteq \mathbb R^d$, with $d \ge 1$. The simple task of the swarm is to spread uniformly over its surface, which can be interpreted as the deposition of a single layer in standard additive manufacturing technologies. This goal can be classically obtained by  resorting  to a Fokker--Planck equation with a constant diffusion term and a suitable drift which senses the distance from the boundaries of the target domain $D$ and, as soon as it enters in $D$, each particle of the swarm starts a random exploration of the domain. A direct inspection of the model allows to easily verify that the explicit asymptotic distribution profile is a weighted combination of a uniform distribution inside $D$ and a Gaussian distribution in $D\setminus\mathbb R^d$. 
However, the unusual presence of a steady--state solution with a constant value inside $D$ makes it difficult to understand whether and at what time rate this stationary solution can actually be reached.

As we shall see in the forthcoming Section \ref{sec:FP},  the \emph{ad-hoc} drift operator in the Fokker--Planck equation generating the desired equilibrium profile is obtained from a potential that is not convex, so that the actual existing mathematical results on the exponential convergence towards equilibrium for the classical collisional equations \cite{T1,TV,OV}, and for the Fokker-Planck equation with irregular coefficients  \cite{LLB}, are no more valid.  To guarantee convergence together with a uniform covering of the domain $D$, we thus adopt a different path that is based on a reformulation of the problem in terms of a new Fokker-Planck equation that, while possessing the same stationary distribution of the previous one, is based on a different balance between the drift and diffusion terms, where the drift is now derived by a quadratic potential, and  the diffusion term has  a variable diffusion coefficient \cite{FPTT22}. In dimension one, a suitable regularization of the new diffusion function permits then to rigorously prove, by means of a Chernoff type inequality \cite{FPTT},  that the solution to the new Fokker--Planck equation converges towards the steady state solution at a rate that is, at least, polynomial in time. We mention recent efforts in this direction in case of subcritical confinement potential \cite{TZ}.

The extension to 2D problems is to date an open problem and we address this question by means of a numerical test based on recently introduced structure preserving numerical methods \cite{PZ} that guarantee positivity of solutions and an accurate approximation of the large time trends. 

In more detail, the paper is organized as follows. In Section \ref{sec:FP} we introduce a classical Fokker-Planck-type model for manufacturing, mimicking the interaction of a system of agents with a given portion of the domain. We then rigorously study the structure of the equilibrium profile  together with its relevant features in one and two dimensions.  Section \ref{sec:trends} will be devoted  to the  study of the convergence of the solution towards equilibrium. To this aim we will resort to a new Fokker--Planck type equation possessing the same steady state but different drift and diffusion operators. Explicit results are established in the 1D case together with trends characterizing convergence. Finally, in Section \ref{sec:num} we propose several numerical examples in 1D and 2D to test the features of the model and the  convergence rate.  \rev{The consistency of the proposed numerical approach is tested through the comparison with classical particle methods, namely by investigating the convergence to equilibrium of a system of particles whose distribution corresponds to the introduced new Fokker-Planck equation.}

\section{Fokker-Planck model in manufacturing processes}
\label{sec:FP}

We are interested in a system of $N\gg 1$ particles interacting with a space domain $D \subset \mathbb R^d$. To simplify computations, and to concentrate on the main mathematical properties of the model, we start our analysis by assuming that $D$ is a $d$-dimensional sphere of center $\x_0$ and radius $\delta>0$.  Hence $D = \{\x \in \mathbb R^d: |\x-\x_0|\le \delta\}$, where $|\x-\mathbf{y}|\ge 0$ is the Euclidean distance between the points $\x,\mathbf{y} \in \mathbb R^d$.  The system of particles  is such that each particle senses the direction of motion, moves towards the center of the sphere $D$, and it starts to randomly explore the target domain $D$ uniformly  as soon as the particle enters in it. 



We indicate by $f(\x,t)\,d\x$ the probability  of finding a particle in the elementary volume $d\x $ around the point $\x\subset\R^d$ at time $t \ge 0$. The mesoscopic model translating the aforementioned dynamics can be described by a Fokker-Planck-type equation with a constant diffusion  and a discontinuous drift  \cite{Risken}, which can be suitably written in divergence form as
\begin{equation}
\label{eq:model}
\partial_t f(\x,t) = \nabla_\x \cdot \left[ \psi(\x,\x_0) f(\x,t) +{\sigma^2} \nabla_\x f(\x,t) \right].
\end{equation}
In equation \fer{eq:model} the function $\psi$ characterizing the drift is expressed by
\be\label{eq:drift}
\psi(\x,\x_0) = 
\begin{cases}
0 & |\x-\x_0| <\delta \\
\x-\x_0 & |\x-\x_0|\ge \delta, 
\end{cases}
\ee
 Consequently, particles move subject to the simultaneous presence of the drift and diffusion operators, unless they are in the target spherical domain $D$, where only the diffusion operator survives. It is immediate to show that the solution to equation \fer{eq:model} is mass preserving. Thus, without lack of generality, we can assume that the initial distribution $f_0(\x,t)$ is a probability density, so that, at any time $t \ge0$
 \be\label{eq:mass}
 \int_{\R^d} f(\x, t) \, d\x = 1.
 \ee
The (unique) steady state of unit mass of the \rev{Fokker--Planck-type} equation \fer{eq:model} is formally obtained by solving the differential equation
\begin{equation}
\label{eq:equi}
 {\sigma^2} \nabla_\x f(\x,t) = -\psi(\x,\x_0) f(\x,t).
\end{equation}
As we shall see in details in the next section, the solution to \fer{eq:equi} is given by 
\begin{equation}
\label{eq:equi2}
f^\infty(\x)  = 
\begin{cases}
\dfrac{m_1}{(2\pi\sigma^2)^{d/2}}\exp\left\{ - \dfrac{(\x-\x_0)^2}{2\sigma^2}\right\} &\textrm{if}\; |\x-\x_0|\ge \delta \vspace{0.25cm} \\ 
m_2 \left(\dfrac{\delta^d \pi^{d/2}}{\Gamma\left(d/2 +1\right)}\right)^{-1} & \textrm{if}\; |\x-\x_0|< \delta
\end{cases}
\end{equation}
The values $m_1,m_2>0$ are not new parameters and later on it will be showed that \rev{they} are determined by imposing that the total mass of the steady state $f^\infty(\x)$ is equal to $1$,  together with the continuity of the steady state at the boundaries of the target domain $D$. 

The steady state is a  continuous function resulting from the weighted combination of a Gaussian density (outside $D$) and a uniform density (inside $D$).  Formally, the same equilibrium configuration \fer{eq:equi2}  can be obtained 
by resorting to other Fokker--Planck type equations \cite{FPTT}. Among others, one is the following
\begin{equation}
\label{eq:model2}
\partial_t f(\x,t) = \nabla_\x \cdot \left[ (\x-\x_0) f(\x,t) + \nabla_\x \left( \kappa(\x)f(\x,t)\right) \right].
\end{equation}
In equation \fer{eq:model2} the continuous function $\kappa$ characterizing the diffusion coefficient is expressed by
\be
\label{eq:k}
\kappa(\x) = 
\begin{cases}
\sigma^2+ \dfrac{\delta^2}{2} - \dfrac{1}{2}|\x-\x_0|^2 & |\x-\x_0|< \delta \\
\sigma^2 & |\x-\x_0|\ge\delta.
\end{cases}
\ee 
\rev{The introduced choice of non-constant diffusion function  \eqref{eq:k}  is such that the steady states of equations   \eqref{eq:model} and  \fer{eq:model2} are equal, since the differential equation  \fer{eq:equi}  can be equivalently rewritten as 
\[
(\x-\x_0)f(\x,t) + \nabla_\x(\kappa(\x)f(\x,t)) = ((\x-\x_0) + \nabla_\x \kappa(\x))f(\x,t) + \kappa(\x)\nabla_\x f(\x).
\] }
Hence, at variance with \fer{eq:model}, the Fokker--Planck equation describes a system of particles   such that each particle senses the direction of motion, moves towards the center of the sphere $D$, and it starts to randomly explore the target domain $D$ adapting its diffusion to the distance from the center $x_0$ of the domain, according to \fer{eq:k}, as soon as the particle enters in it. 

From the mathematical point of view, the main difference between the two models is that in equation \fer{eq:model2} the drift is obtained from a strongly convex potential, while in \fer{eq:model} it is not.  \rev{Hence, as we shall detail in the next Section, the convergence to the equilibrium configuration \fer{eq:equi2} of the solution to this second model rests on sound and rigorous mathematical results \cite{FPTT}.}

\subsection{Large times behavior}\label{sub:1D}

We start our analysis by studying the main mathematical properties of the common equilibrium of the Fokker--Planck type equations \fer{eq:model}--\fer{eq:model2} in the one dimensional case $d=1$, and we will deal with the relaxation process of the solution towards it. Further, we will discuss the possibility to extend the analysis to the 2D case. 

\subsection{1D case}
Let  $d = 1$. Then, the stationary distribution of \eqref{eq:model} solves the  differential equation
\[
\psi(x,x_0)f + \sigma^2 \partial_x f= 0, 
\]
or equivalently
\[
\begin{cases}
(x-x_0)f^\infty(x) + \sigma^2 \partial_xf^\infty(x) = 0, & x \notin D \\
 \sigma^2 \partial_xf^\infty(x) = 0, & x \in D. 
\end{cases}
\]
Hence, if $x \notin D$, there exists a portion of mass of the equilibrium distribution that behaves like a Gaussian density, solution of the classical Fokker-Planck equation with quadratic potential and constant diffusion.
On the other hand,  if $x \in D$ the steady state distribution is a constant. 
In full generality, the continuous equilibrium distribution can be written as the one-dimensional version of \fer{eq:equi2}, expressed by
\begin{equation}
\label{eq:finf}
f^\infty(x)  = 
\begin{cases}
\dfrac{m_1}{\sqrt{2\pi\sigma^2}}\exp\left\{ - \dfrac{(x-x_0)^2}{2\sigma^2}\right\} &\textrm{if}\; |x-x_0|\ge \delta,\vspace{0.25cm} \\ 
\dfrac{m_2}{2\delta} & \textrm{if}\; |x-x_0|< \delta.
\end{cases}
\end{equation}
The values $m_1,m_2>0$ quantify the percentages of mass of the Gaussian and uniform parts, respectively.
It is immediate to verify that these values  can be determined by imposing mass conservation and continuity of the steady state at the boundaries of the target domain.  From  condition  \fer{eq:mass} we get
\[
\dfrac{m_1}{\sqrt{2\pi\sigma^2}} \int_{|x-x_0|\ge \delta}\exp\left\{ - \dfrac{(x-x_0)^2}{2\sigma^2}\right\}  dx +m_2 = 1, 
\]
which can be rewritten in terms of the \emph{erf} function as
\[
 m_1\left(1-  \erf\left(  \dfrac{\delta}{\sqrt{2\sigma^2}}\right)\right) = 1 - m_2.
 \]
Next, by imposing continuity of the equilibrium density at the boundaries of $D$ gives the condition
\[
 \dfrac{m_1}{\sqrt{2\pi\sigma^2}} \textrm{exp}\left\{-\dfrac{\delta^2}{2\sigma^2}\right\} = \dfrac{m_2}{2\delta}.
\]
Therefore, for any fixed diffusion coefficient $\sigma^2$ and any $\delta>0$ we get precise information on the mass fraction entering in $D$ by solving the linear system 
\begin{equation}
\label{eq:sys}
\begin{cases}
 m_1\left(1-  \erf\left(  \dfrac{\delta}{\sqrt{2\sigma^2}}\right)\right) +m_2= 1  \\
 \dfrac{m_1}{\sqrt{2\pi\sigma^2}} \textrm{exp}\left\{-\dfrac{\delta^2}{2\sigma^2}\right\} - \dfrac{m_2}{2\delta} = 0. 
\end{cases}
\end{equation}
with respect to the pair $(m_1,m_2)$. Since
\[
K = 
\textrm{det}
\begin{bmatrix}
1- \erf\dfrac{\delta}{\sqrt{2\sigma^2}} & 1 \\
\dfrac{1}{\sqrt{2\pi\sigma^2}}\exp\left\{ -\dfrac{\delta^2}{2\sigma^2}\right\} & -\dfrac{1}{2\delta}
\end{bmatrix}
<0
\]
for any $\delta>0$, system \fer{eq:sys} has a unique solution. Furthermore, the values of the solution to \eqref{eq:sys} are positive, coherently with their introduction in \eqref{eq:finf}. 

A further interesting question stemming from system \fer{eq:sys} is related to the possibility to fix in advance the percentage $m_2$ of mass of the steady state located in $D$, and to find consequently the value of $\sigma$ which does the job. Indeed, when considering the system of interacting agents, this corresponds to control in advance the goal of the manufacturing process.  The question is essential to control possible errors in exploring the target domain. 

To this aim, let us fix $m_2$ and let us rewrite system \fer{eq:sys} in terms of the unknown $m_1$ and  $x= \frac{\delta}{\sqrt{2\sigma^2}}$.  System \eqref{eq:sys}  is equivalent to 
\be
\label{eq:sys2}
\begin{cases}
m_1\left[ 1-\erf(x) \right] = 1-m_2, \\
\dfrac{2m_1}{\sqrt{\pi}} x\exp\{-x^2\} = m_2. 
\end{cases}
\ee
Hence, letting $m_* = 1-m_2$ from the first equation of \eqref{eq:sys2} we have 
\[
m_1 = \dfrac{m_*}{1-\erf(x)}
\]
and substituting this value on the second equation we get
\[
\dfrac{2}{\sqrt{\pi}}\dfrac{x\exp\{-x^2\}}{1-\erf(x)} = \dfrac{1-m_*}{m_*}. 
\]
We may observe that the function on the left-hand side of the last equation is  strictly increasing. Indeed, if $H(x) = \dfrac{2}{\sqrt{\pi}}\dfrac{x\exp\{-x^2\}}{1-\erf(x)} $ we have 
\[
\lim_{x \rightarrow 0^+} H(x) = 0, \qquad \lim_{x \rightarrow +\infty} H(x) = +\infty
\]
and 
\[
\begin{split}
\dfrac{d}{dx}H(x) &= \underbrace{\dfrac{2}{\sqrt{\pi}}\dfrac{\exp\{-x^2\}}{(1-\erf(x))^2}}_{> 0} \underbrace{\left[ (1-2x^2)(1-\erf(x)) + \dfrac{2}{\sqrt{\pi}}\exp\{-x^2\}\right] }_{G(x)},
\end{split}\]
where $G(x)$ is a positive strictly decreasing function such that $G(0) = 1$ and $\lim_{x \rightarrow +\infty}G(x) = 0$. Indeed we easily observe that 
\[
\dfrac{d}{dx}G(x) = -4x(1-\erf(x))\le 0
\]
for all $x \in \mathbb R_+$. Since $H(x)$ is a strictly monotone increasing function in $[0,+\infty)$, there exists a unique value $\bar x \in (0,+\infty)$ such that $H(\bar x) = \dfrac{1-m_*}{m_*}$. Therefore, $\sigma^2$ and $m_1>0$ are uniquely determined as 
\[
\sigma^2= \dfrac{\delta^2}{2\bar x^2}, \qquad m_1 = \dfrac{1-m_2}{1-\erf(\bar x)}.
\]
In Figure \ref{fig:mass} we depict the steady states defined in \eqref{eq:sys2} for several choices of $m_2$ and where we considered as target domain $D = \left\{x \in \mathbb R: |x-x_0|\le \frac{1}{2}\right\}$ with $x_0 = 0$. On the right plot we report the obtained diffusion coefficient $\sigma^2$, for several choices of $m_2$, solution to \eqref{eq:sys2}. We can observe how $\sigma^2$ rapidly decays for increasing values of $m_2$. This behavior illustrates that, to guarantee prescribed mass values inside $D$ and given error levels, represented by the tails produced by the distribution outside $D$, it is possible to prescribe uniquely  the diffusion  coefficient.

\begin{figure}
\centering
\includegraphics[scale = 0.375]{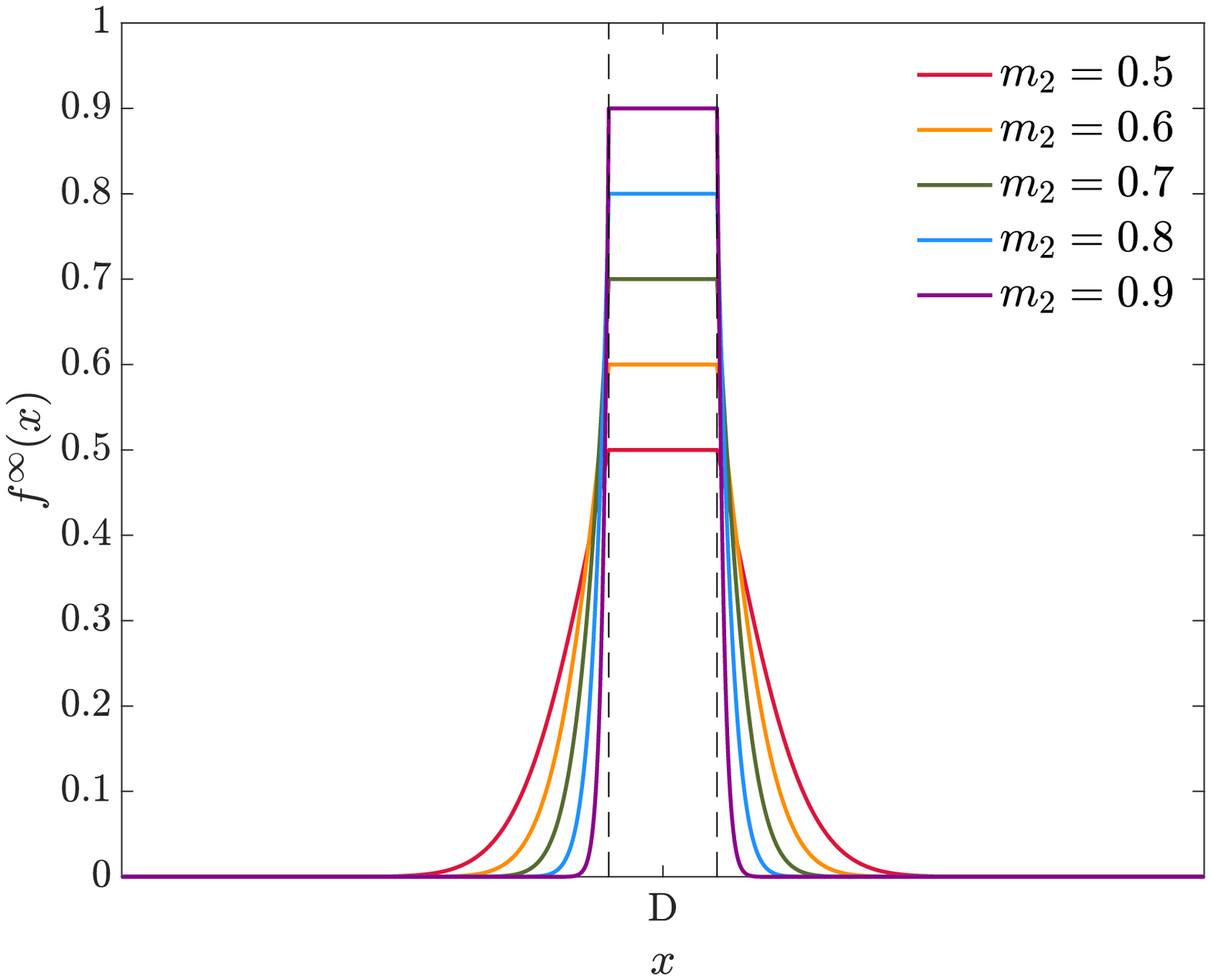}
\includegraphics[scale = 0.375]{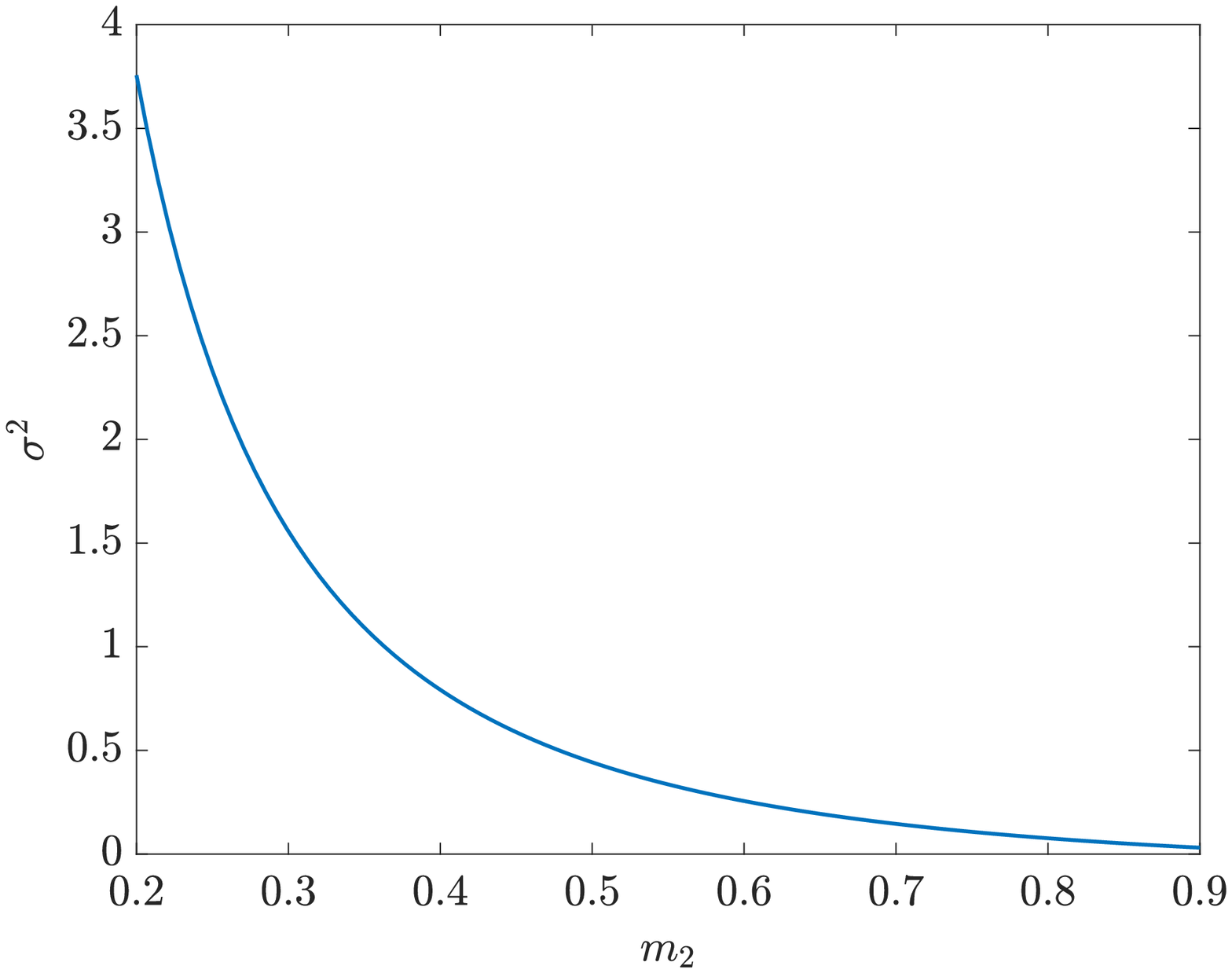}
\caption{Left: we represent the steady state \eqref{eq:finf} where $D = [-\delta,\delta]$, $x_0 = 0$ and $\delta = \frac{1}{2}$. We fixed the value of $m_2$ and we solved numerically \eqref{eq:sys} to determine $\sigma^2$.  Right: values of $\sigma^2$ determined from \eqref{eq:sys} for several values of $m_2 \in [0.2,0.9]$.  }
\label{fig:mass}
\end{figure}

\subsection{2D case}

Let us now consider the case $d = 2$. Proceeding as in Section \ref{sub:1D} we easily conclude that the equilibrium distribution is given by \fer{eq:equi2}, which for $d =2$ takes the form
\be
\label{eq:eq_mD}
f^\infty(\x) = 
\begin{cases}
\dfrac{m_1}{2\pi\sigma^2} \exp\left\{ -\dfrac{|\x-\x_0|^2}{2\sigma^2}\right\} & |\x-\x_0|\ge \delta, \\
\dfrac{m_2}{\delta^2 \pi} & |\x-\x_0| < \delta.
\end{cases}
\ee
Similarly to the 1D case, the values of $m_1,m_2>0$ are determined by imposing conservation of the total mass of the distribution and continuity at the boundaries of $D$. In details, by imposing $\int_{\mathbb R^2} f^\infty(\mathbf{x})dxdy = 1$  we get
\[
\dfrac{m_1}{2\pi\sigma^2} \int_{|\mathbf x - \mathbf x_0|>\delta} \exp\left\{ -\dfrac{|\mathbf{x}-\mathbf{x}_0|^2}{2\sigma^2}\right\}dxdy + m_2 = 1,
\]
that can be now easily solved in polar coordinates. Furthermore, the continuity at the interface is obtained by imposing 
\[
\lim_{|\mathbf{x}- \mathbf{x}_0| \rightarrow \delta} \dfrac{m_1}{2\pi\sigma^2} \exp\left\{ -\dfrac{|\x-\x_0|^2}{2\sigma^2}\right\}=  \dfrac{m_2}{\delta^2 \pi}. 
\]
Therefore, the constants $m_1,m_2$ are solution of the following system
\be
\label{eq:prob_2D}
\begin{cases}
m_1 \exp\left\{ -\dfrac{\delta^2}{2\sigma^2}\right\} + m_2 = 1 \\
\dfrac{m_1}{2\sigma^2}\exp\left\{ -\dfrac{\delta^2}{2\sigma^2}\right\} -\rev{ \dfrac{m_2}{\delta^2}} = 0.
\end{cases}
\ee
For given $\delta>0$ and $\sigma^2$ we obtain
\[
m_1 = e^{\delta^2/2\sigma^2} \dfrac{2\sigma^2}{2\sigma^2 + \delta^2}>0, \qquad m_2 = \rev{\dfrac{\delta^2}{2\sigma^2 + \delta^2}}>0. 
\]
As before, we can investigate whether, given $\delta>0$ and $m_2>0$, there exists a unique value for $\sigma^2$ and $m_1$ solution to \eqref{eq:prob_2D}. Letting $x = \dfrac{\delta}{\sqrt{2\sigma^2}}>0$ we may rewrite \eqref{eq:prob_2D}  as
\[
\begin{cases}
m_1 e^{-x^2} = 1-m_2 \\
m_1 e^{-x^2} -\rev{ \dfrac{m_2}{x^2 }} = 0,
\end{cases}
\]
from which, setting as before $m_* = 1-m_2$, we get $m_1 = m_* e^{x^2}$. Therefore, we have
\[
\rev{
x^2 = \dfrac{1-m_*}{ m_*}.}
\]
Hence, there exists a unique $\bar x \in (0,+\infty)$ solving the above equation, and $m_1,\sigma^2>0$ are uniquely determined as 
\[
\sigma^2 = \dfrac{\delta^2}{2\bar x^2}, \qquad m_1 = e^{\bar x^2} m_*. 
\]
Note that the computations in the 2D case are simpler than the computations of Section \ref{sub:1D}.
\begin{figure}
\centering
\includegraphics[scale = 0.35]{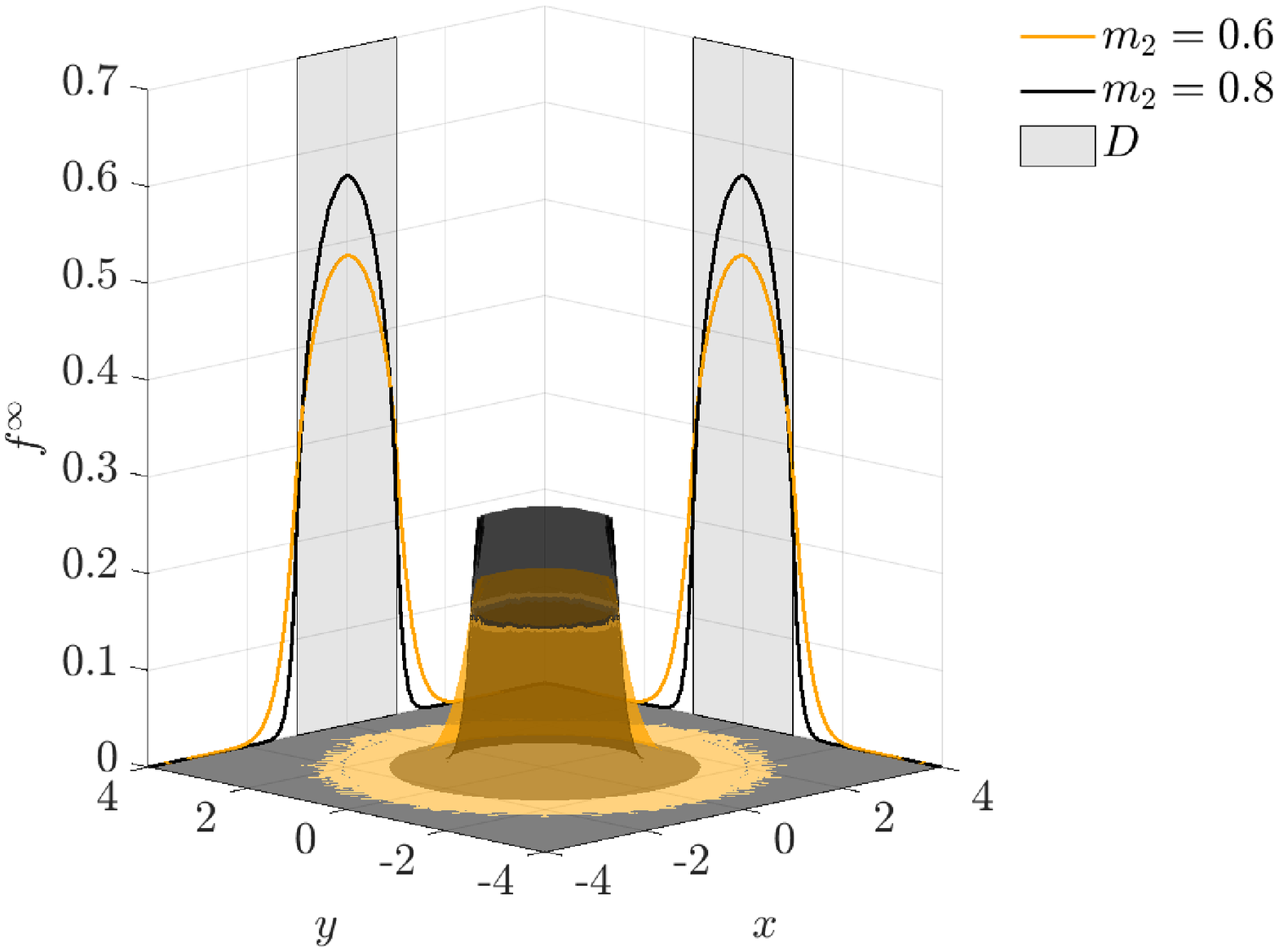}
\includegraphics[scale = 0.35]{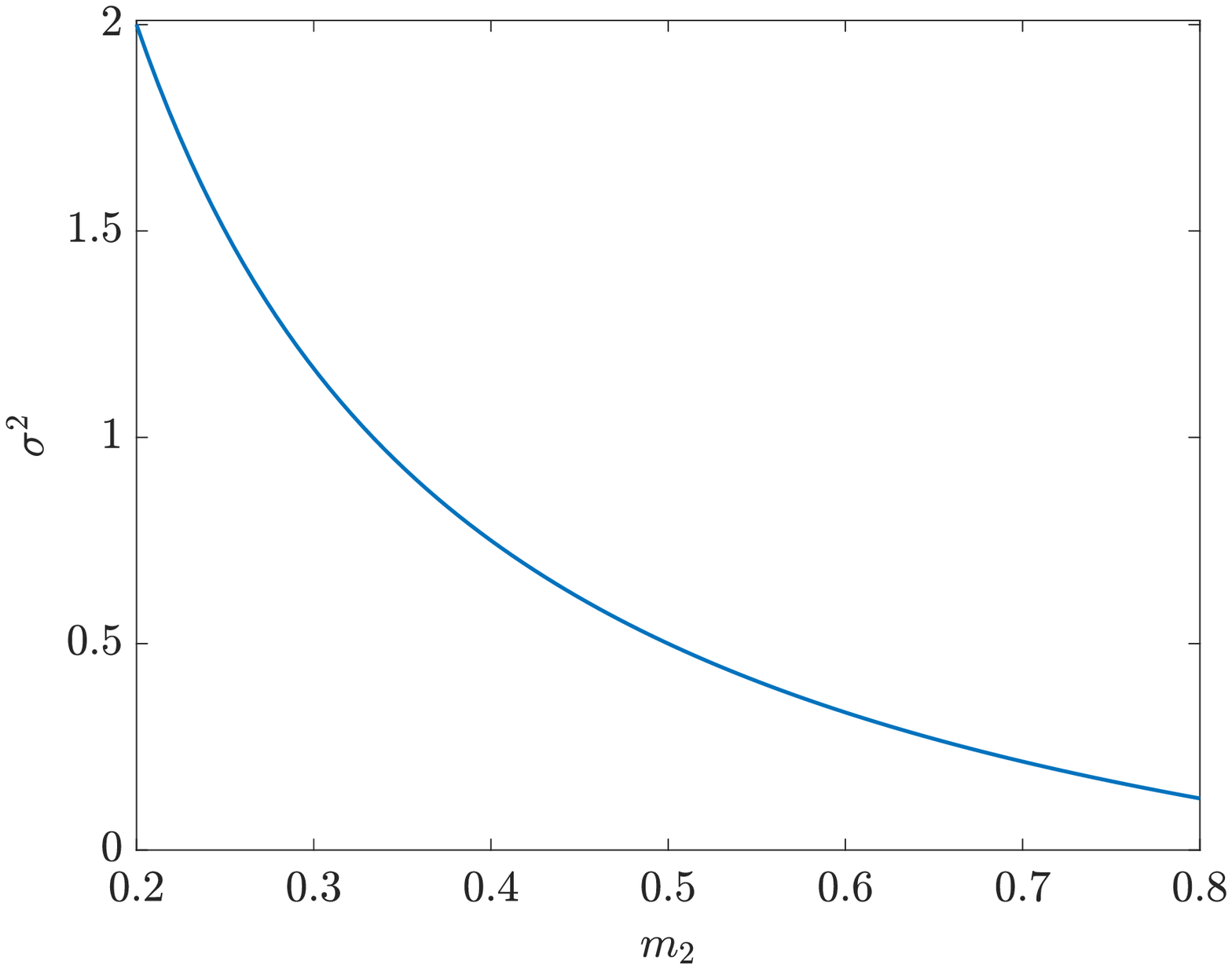}
\caption{Left: steady state defined by \eqref{eq:eq_mD} for two fixed values of \rev{$m_2 = 0.6,0.8$}, and for the domain $D = \{\mathbf x \in \mathbb R^2: |\mathbf x - \mathbf x_0|\le \delta\}$ with $\delta = 1$ and $\mathbf x_0 = (0,0)$. The values of $m_1,\sigma^2>0$ have been determined as the numerical solution to \eqref{eq:prob_2D}. We plot the marginal distributions $\int_{\mathbb R} f^\infty(\mathbf x)dx$ and $\int_{\mathbb R} f^\infty(\mathbf x)dy$ on the planes $xz$ and $yz$.  Right: values of $\sigma^2>0$ solution to \eqref{eq:prob_2D} for several values of \rev{$m_2 \in [0.2,0.8]$}. }
\label{fig:2D_system}
\end{figure}

In Figure \ref{fig:2D_system} we depict the steady state \eqref{eq:eq_mD} for two choices of  $m_2= 0.5,0.9$ where we considered the target domain $D = \{\mathbf x \in \mathbb R^2:|\mathbf x - \mathbf x_0| \le 1\}$ with $\mathbf x_0 = (0,0)$. Once fixed $m_2>0$ and $\delta=1$ we solved system \eqref{eq:prob_2D} numerically to get the unique values of $m_1>0$ and $\sigma^2>0$ that guarantee continuity at the boundaries of $D$ and conservation of the total mass. On the right plot we depict the resulting values of $\sigma^2$ for $m_2 \in [0.2,0.9]$. We may easily observe how, similarly to the 1D case, the value of $\sigma^2>0$ rapidly decays of large $m_2$.

\section{Trend to equilibrium}
\label{sec:trends}

In this section we restrict our analysis to the simplest case $d =1$. We aim to discuss the validity of the choice of resorting to the Fokker--Planck type equation \eqref{eq:model}, characterized by a constant coefficient of diffusion, in relation with the possibility to rigorously study the relaxation to equilibrium of its density function solution $f(x,t)$, $t >0$, $x \in \R$.  The Fokker--Planck equation  \eqref{eq:model} is characterized  by constant diffusion and by the confining potential
\[
P(x) = 
\begin{cases}\vspace{0.25cm}
\dfrac{\delta^2}{2} & |x-x_0|<\delta \\
\dfrac{(x-x_0)^2}{2} & |x-x_0|\ge\delta, 
\end{cases}
\] 
that is not differentiable in $x = x_0 \pm \delta$. The equilibrium density of a Fokker-Planck model is related to the confinement potential by the formula \cite{Risken}  
\[
f^\infty(x) = C e^{-P(x)}. 
\]
The rate of convergence of the solution of \eqref{eq:model} towards equilibrium  has been deeply studied for several classes of function $\psi(x,x_0)$. In particular, a rigorous proof of exponential convergence to equilibrium is restricted to strongly convex potentials, i.e. $d^2/dx^2 P(x)\ge c >0$ \cite{OV}. Hence, these results are not directly applicable to equation \fer{eq:model}. A possible strategy could rely in a suitable correction of the potential to obtain convexity, but this methods will destroy the possibility to have a steady state with a flat profile inside $D$, which is exactly the goal of the model. A second possibility is to resort to the Fokker--Planck equation \fer{eq:model2} which, while maintaining the same steady profile, is characterized by variable diffusion coefficient and a drift derived by a strongly convex potential. 

In 1D, the Fokker--Planck equation \fer{eq:model2} has the form
\be\label{eq:FP}
\partial_t f(x,t) = \partial_x  \left[ (x-x_0) f(x,t) + \partial_x (\kappa(x)f(x,t))\right],
\ee
where $\kappa(x)$ is a continuous non constant diffusion function of the form 
\be
\label{eq:kappa}
\kappa(x) = 
\begin{cases}
\sigma^2 + \dfrac{\delta^2}{2} - \dfrac{1}{2}(x-x_0)^2 & |x-x_0|< \delta \\
\sigma^2 & |x-x_0|\ge\delta.
\end{cases}
\ee
It is immediate to verify that equation \fer{eq:FP} has the same stationary solution \fer{eq:finf} of the original Fokker--Planck equation \fer{eq:model}.
We may observe that the diffusion coefficient $\kappa(x)$ is not differentiable at the points $x = x_0 \pm \delta$. 

To proceed, we introduce a $C^2(\R)$ regularization $\kappa_\epsilon(x)$, $\e  \ll 1$, of \eqref{eq:kappa} such that $\kappa_\e(x) = \kappa(x)$ in the domain $D_\e= |x-x_0| \le \delta -\e$, and $\kappa_\epsilon(x) \rightarrow \kappa(x)$ uniformly for $\epsilon\rightarrow 0^+$. More details on its possible form will be discussed in the next section. Thanks to the considered regularization, we consider the following surrogate model 
\be
\label{eq:model_epsilon}
\partial_t f_\epsilon(x,t) = \partial_x  \left[ (x-x_0) f_\epsilon(x,t) + \partial_x (\kappa_\epsilon(x)f_\epsilon(x,t))\right],
\ee
whose large time distribution of unit mass is now $f^\infty_\epsilon(x)$. We outline that the shape of the equilibrium is heavily dependent of the introduced regularization $\kappa_\epsilon(x)$, and it is generally not known. However, by construction we know that its profile is flat in the domain $D_\e\subset D$. The equilibrium density is solution to 
\begin{equation}
\label{eq:fepsilon_cond}
((x-x_0)  + \kappa_\epsilon^\prime(x))f_\epsilon(x,t) + \kappa_\epsilon(x) \partial_x f_\epsilon(x,t) = 0. 
\end{equation}
The following result holds. 

\begin{theorem}
Let $f_\epsilon(x,t)$ be the solution to \eqref{eq:model_epsilon} and let $\|\kappa_\epsilon(x) \|_{L^\infty}<C$. If $f_\epsilon(x,0) \in L^2(\R)$ then  $f_\epsilon \in L^2(\R \times [0,+\infty))$ for any $\epsilon>0$ and
\[
\|f_\epsilon(t) \|_{L^2}^2 \le e^{2t} \| f(0)\|_{L^2}
\]
\end{theorem}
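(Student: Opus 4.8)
The plan is to prove the estimate by a direct $L^2$ (energy) computation, in the classical spirit of a priori bounds for linear Fokker--Planck equations, the only novelty being the nonconstant diffusion coefficient $\kappa_\e$ — which, however, for every fixed $\e>0$ is $C^2(\R)$, bounded above by $C$ by hypothesis, and bounded below by a positive constant (inside $D$ one has $\kappa_\e=\sigma^2+\tfrac{\delta^2}{2}-\tfrac12(x-x_0)^2\ge\sigma^2$, outside $D$ one has $\kappa_\e=\sigma^2$, and the regularization can be chosen so as not to dip below zero in the transition layer). Consequently equation \eqref{eq:model_epsilon} is uniformly parabolic with smooth coefficients, hence classically well posed, and its solution is smooth for $t>0$ with Gaussian-type decay at infinity inherited from the confining drift $x-x_0$. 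This regularity is what legitimates the integrations by parts below, the boundary terms at $\pm\infty$ vanishing; alternatively one may run the computation on a Galerkin/truncation approximation and pass to the limit, the bound $\|\kappa_\e\|_{L^\infty}<C$ ensuring that all the quantities involved are finite.

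First I would rewrite \eqref{eq:model_epsilon} in the form
\[
\partial_t f_\e = \partial_x\big[\,\big((x-x_0)+\kappa_\e'(x)\big)f_\e+\kappa_\e(x)\,\partial_x f_\e\,\big],
\]
multiply by $f_\e$ and integrate over $\R$. After one integration by parts the right-hand side becomes $-\int_\R\big((x-x_0)+\kappa_\e'\big)f_\e\,\partial_x f_\e\,dx-\int_\R\kappa_\e(\partial_x f_\e)^2\,dx$; writing $f_\e\,\partial_x f_\e=\tfrac12\partial_x(f_\e^2)$ in the first integral and integrating by parts once more yields the identity
\[
\frac12\frac{d}{dt}\|f_\e(t)\|_{L^2}^2=\frac12\int_\R\big(1+\kappa_\e''(x)\big)f_\e^2\,dx-\int_\R\kappa_\e(x)\,(\partial_x f_\e)^2\,dx .
\]
Since $\kappa_\e\ge0$ the dissipation term is nonpositive and can be discarded, leaving $\tfrac12\frac{d}{dt}\|f_\e\|_{L^2}^2\le\tfrac12\int_\R(1+\kappa_\e'')f_\e^2\,dx$.

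It then remains to bound $\tfrac12\int_\R(1+\kappa_\e'')f_\e^2\,dx$ by $\|f_\e\|_{L^2}^2$, i.e.\ to control the contribution of $\kappa_\e''$, and this is the step I expect to be the main obstacle. On $D_\e$ one has exactly $\kappa_\e''=-1$, so $1+\kappa_\e''=0$; for $|x-x_0|\ge\delta$ one has $\kappa_\e''=0$, so $1+\kappa_\e''=1$; the only region where $\kappa_\e''$ can be large (of order $1/\e$) is the thin layer $\{\delta-\e\le|x-x_0|\le\delta\}$, and it is precisely here that the explicit form of the regularization $\kappa_\e$ — deferred to the next section — together with the hypothesis $\|\kappa_\e\|_{L^\infty}<C$ must be used, the target being the pointwise bound $1+\kappa_\e''\le2$. (Equivalently one may avoid differentiating $\kappa_\e$ twice and complete the square, $-\kappa_\e(\partial_x f_\e)^2-\kappa_\e' f_\e\partial_x f_\e=-\kappa_\e\big(\partial_x f_\e+\tfrac{\kappa_\e'}{2\kappa_\e}f_\e\big)^2+\tfrac{(\kappa_\e')^2}{4\kappa_\e}f_\e^2$, trading the bound on $\kappa_\e''$ for one on $(\kappa_\e')^2/\kappa_\e$.) Granting such a bound, one gets $\frac{d}{dt}\|f_\e(t)\|_{L^2}^2\le2\|f_\e(t)\|_{L^2}^2$, and Gronwall's lemma gives $\|f_\e(t)\|_{L^2}^2\le e^{2t}\|f_\e(0)\|_{L^2}^2$; in particular $f_\e(t)\in L^2(\R)$ for every $t\ge0$ and $f_\e\in L^2(\R\times[0,T])$ for each $T>0$, which is the assertion of the theorem.
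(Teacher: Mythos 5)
Your proposal follows essentially the same route as the paper: multiply by $f_\e$, integrate by parts, discard the nonnegative dissipation $\int_\R\kappa_\e(\partial_x f_\e)^2\,dx$, and close with Gronwall. The computations agree (your identity $\tfrac12\tfrac{d}{dt}\|f_\e\|_{L^2}^2=\tfrac12\int_\R(1+\kappa_\e'')f_\e^2\,dx-\int_\R\kappa_\e(\partial_x f_\e)^2\,dx$ is exactly what the paper's expansion gives after the two integrations by parts). The one point where you stop short --- controlling $\int_\R\kappa_\e'' f_\e^2\,dx$ in the transition layer, where $\kappa_\e''$ is of order $1/\e$ --- is precisely the point the paper's proof passes over silently: the asserted inequality $\int_\R f_\e\,\partial_x^2(\kappa_\e f_\e)\,dx\le -C\|\partial_x f_\e\|_{L^2}^2$ drops the term $\tfrac12\int_\R\kappa_\e'' f_\e^2\,dx$ without comment (and, as written, invokes the upper bound $\|\kappa_\e\|_{L^\infty}<C$ where a lower bound on $\kappa_\e$ would be the relevant one). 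So you have correctly identified a genuine weak spot rather than created a new one. Your parenthetical fix via completing the square is in fact the cleanest way to close it: since $\kappa_\e\ge\sigma^2>0$ and $|\kappa_\e'|$ is bounded uniformly in $\e$ (it interpolates between $0$ and $\e-\delta$ across the layer), one gets $\tfrac{d}{dt}\|f_\e\|_{L^2}^2\le\bigl(1+\sup_x\tfrac{(\kappa_\e')^2}{2\kappa_\e}\bigr)\|f_\e\|_{L^2}^2$ with a constant independent of $\e$, though not necessarily equal to the $2$ appearing in the statement unless $\delta^2\le 2\sigma^2$. Modulo that constant, your argument is complete and slightly more careful than the paper's.
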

\begin{proof}
We multiply \eqref{eq:model_epsilon} by $2f_\epsilon(x,t)$ and we integrate over $\R$
\[
\begin{split}
\dfrac{d}{dt} \| f_\epsilon(t)\|^2_{L^2} &= \int_\R 2f_\epsilon \partial_x\left[(x-x_0)f_\epsilon + \partial_x (\kappa_\epsilon(x)f_\epsilon)  \right]dx \\
&=  \int_\R 2 f_\epsilon \left[f_\epsilon + (x-x_0)\partial_x f_\epsilon + \partial_x^2(\kappa_\epsilon(x)f_\epsilon) \right]dx \\
&= 2 \| f_\epsilon\|_{L^2}^2 + 2\int_\R f_\epsilon (x-x_0)\partial_x f_\epsilon dx+ 2 \int_\R f_\epsilon \partial_x^2 (\kappa_\epsilon(x)f_\epsilon )dx
\end{split}
\]
Noticing that 
\[
\begin{split}
2\int_\R f_\epsilon (x-x_0)\partial_x f_\epsilon dx = - \int_\R f_\epsilon^2dx  = -\| f_\epsilon\|^2_{L^2},
\end{split}
\]
and that 
\[
\int_\R f_\epsilon \partial_x^2 (\kappa_\epsilon(x)f_\epsilon )dx \le -C \|\partial_x f_\epsilon \|_{L^2}^2,
\]
we have
\[
\dfrac{d}{dt} \|f_\epsilon \|_{L^2}^2 \le 2 \|f_\epsilon(t) \|^2_{L^2}, 
\]
and thanks to the Gronwall inequality we conclude. 
\end{proof}

For the classical Fokker-Planck equation, where $\kappa_\epsilon(x) \equiv 1$, the steady state of unit mass is the Gaussian density 
\[
g(x) = \dfrac{1}{\sqrt{2\pi\sigma^2}} \exp\left\{ - \dfrac{(x-x_0)^2}{2}\right\}.
\]
In this case, if the relative Shannon entropy between the solution $f_\epsilon(\cdot,t)$ and the equilibrium density $g$, given by 
\[
H(f(t)|g) = \int_{\R} f(x,t)\log\dfrac{f(x,t)}{g(x,t)}dx, 
\]
is bounded at $t = 0$, then this quantity decays exponentially in time towards zero ensuring convergence towards equilibrium in $L^1(\R)$ with an explicit exponential rate \cite{OV}. Few results about convergence to equilibrium are available in the case of non constant diffusion coefficients. These results make use of new differential inequalities, like Chernoff inequality \cite{FPTT}, and generally exponential convergence to equilibrium is lost.  

\rev{To enlighten the understanding of the mathematical methods which allow to study the convergence in time of the solution to Fokker--Planck type equations like \fer{eq:model_epsilon}, let us briefly summarize the analysis of \cite{FPTT, TT}.}

In detail, let us rewrite equation \eqref{eq:model_epsilon} for the quotient $F_\epsilon(x,t) = f_\epsilon(x,t)/f^\infty_\epsilon(x)$
\begin{equation}
\label{eq:model2_epsilon}
\partial_t F_\epsilon(x,t) = \kappa_\epsilon(x) \partial_x^2 F_\epsilon(x,t) - (x-x_0)\partial_x F_\epsilon(x,t). 
\end{equation}

We recall the following result.
\begin{theorem}\cite{FPTT}\label{th:3}
Let us consider the smooth convex function $\Phi(x)$, $x \in \R_+$. Then if $F_\epsilon(x,t)$ is the solution of \eqref{eq:model2_epsilon} and $c \le F_\epsilon(x,t)\le C$, for $0<c<C$ and any $\epsilon>0$, the functional 
\[
\Theta(F(t)) = \int_\R f^\infty_\epsilon(x) \Phi(F_\epsilon(x,t))dx
\]
is monotonically decreasing in time and 
\[
\dfrac{d}{dt} \Theta(F_\epsilon(t))= -I_\Theta(F_\epsilon(t)), 
\]
where the entropy production term $I_\Theta$ is given by
\[
I_\Theta(F_\epsilon(t)) = \int_\R \kappa_\epsilon(x)f_\epsilon^\infty(x)\Phi^{\prime\prime}(F_\epsilon(x,t))\left| \partial_x F_\epsilon(x,t)\right|^2dx\ge 0.
\]
\end{theorem}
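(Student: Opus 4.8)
The plan is to carry out the classical entropy--dissipation computation, the decisive observation being that the equilibrium $f^\infty_\epsilon$ rewrites the Fokker--Planck operator in \fer{eq:model_epsilon} as a weighted diffusion operator acting on the quotient $F_\epsilon = f_\epsilon/f^\infty_\epsilon$.

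First I would recast \fer{eq:model_epsilon} in flux form. Since $f^\infty_\epsilon$ solves the stationary equation associated with \fer{eq:model_epsilon}, that is $(x-x_0)f^\infty_\epsilon + \partial_x(\kappa_\epsilon(x)f^\infty_\epsilon) = 0$, inserting $f_\epsilon = F_\epsilon f^\infty_\epsilon$ and expanding by the product rule gives
\[
(x-x_0)f_\epsilon + \partial_x\bigl(\kappa_\epsilon(x)f_\epsilon\bigr) = \kappa_\epsilon(x)\,f^\infty_\epsilon(x)\,\partial_x F_\epsilon(x,t),
\]
so that \fer{eq:model_epsilon} is equivalent to $f^\infty_\epsilon\,\partial_t F_\epsilon = \partial_x\bigl(\kappa_\epsilon f^\infty_\epsilon\,\partial_x F_\epsilon\bigr)$, which is precisely \fer{eq:model2_epsilon} once divided by $f^\infty_\epsilon$. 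This identity is what I would use throughout.

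Next I would differentiate $\Theta$ under the integral sign; this is legitimate because the two-sided bound $c \le F_\epsilon \le C$ confines $F_\epsilon$ to a compact subset of $\R_+$ on which $\Phi,\Phi',\Phi''$ are bounded, and because parabolic regularity supplies the required smoothness of $F_\epsilon$ and its integrability against $f^\infty_\epsilon$. Using the flux form,
\[
\frac{d}{dt}\Theta(F_\epsilon(t)) = \int_\R f^\infty_\epsilon\,\Phi'(F_\epsilon)\,\partial_t F_\epsilon\,dx = \int_\R \Phi'(F_\epsilon)\,\partial_x\bigl(\kappa_\epsilon f^\infty_\epsilon\,\partial_x F_\epsilon\bigr)\,dx,
\]
and a single integration by parts in $x$, discarding the boundary terms at $\pm\infty$, yields
\[
\frac{d}{dt}\Theta(F_\epsilon(t)) = -\int_\R \kappa_\epsilon(x)\,f^\infty_\epsilon(x)\,\Phi''(F_\epsilon)\,|\partial_x F_\epsilon|^2\,dx = -I_\Theta(F_\epsilon(t)).
\]
To close, I would note that $\kappa_\epsilon$ is a regularization of \fer{eq:kappa} and $\kappa \ge \sigma^2 > 0$, hence $\kappa_\epsilon > 0$; combined with $f^\infty_\epsilon \ge 0$ and $\Phi'' \ge 0$ (convexity of $\Phi$) this gives $I_\Theta(F_\epsilon(t)) \ge 0$, so $\Theta$ is non-increasing in time.

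The genuine difficulty is the justification rather than the formal manipulation. The hard part will be to show that the boundary contributions $\bigl[\Phi'(F_\epsilon)\,\kappa_\epsilon f^\infty_\epsilon\,\partial_x F_\epsilon\bigr]_{-\infty}^{+\infty}$ vanish and that both $\Theta$ and $I_\Theta$ are finite: outside $D$ the coefficient $\kappa_\epsilon$ reduces to the constant $\sigma^2$ and $f^\infty_\epsilon$ is a genuine Gaussian, whose super-exponential decay, together with the $L^\infty$ bound on $F_\epsilon$ and interior parabolic estimates controlling $\partial_x F_\epsilon$ by weighted norms of $F_\epsilon$, makes those terms disappear. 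It is also worth recording that the hypothesis $c \le F_\epsilon \le C$ is consistent with the dynamics: since the equation \fer{eq:model2_epsilon} for $F_\epsilon$ carries no zeroth-order term, the parabolic maximum principle propagates such two-sided bounds from $t = 0$.
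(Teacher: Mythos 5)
Your computation is correct and is exactly the standard entropy-production argument behind this result: the paper itself does not prove Theorem~\ref{th:3} but recalls it from \cite{FPTT}, where the proof proceeds precisely as you do, by writing the drift--diffusion operator as $\partial_x\bigl(\kappa_\epsilon f^\infty_\epsilon\,\partial_x F_\epsilon\bigr)$ via the stationary equation and integrating by parts against $\Phi'(F_\epsilon)$. Your identification of the real technical content (vanishing of the boundary terms thanks to the Gaussian decay of $f^\infty_\epsilon$, and the role of the two-sided bound $c\le F_\epsilon\le C$ in keeping $\Phi$, $\Phi'$, $\Phi''$ bounded) matches the hypotheses under which the cited result is established.
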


Since the relative Shannon entropy is obtained by choosing $\Phi(x) = x\log(x)$, Theorem \ref{th:3} implies that the entropy production can be written as
\[
\begin{split}
I_S(F_\epsilon(t)) &= \int_\R \kappa_\epsilon(x) f^\infty_\epsilon(x) \dfrac{1}{F_\epsilon(x,t)}|\partial_x F_\epsilon(x,t)|^2dx \\
&= 4\int_\R \kappa_\epsilon(x) f^\infty_\epsilon(x,t)\left( \partial_x \sqrt{\dfrac{f_\epsilon(x,t)}{f^\infty_\epsilon(x)}}  \right)^2dx
\end{split}
\]
If now we consider the convex function $\Phi(x) = (\sqrt{x}-1)^2$,  the functional $\Theta(F_\epsilon(x,t))$ coincides with 
\[
\int_{\R} f^\infty_\epsilon(x) \left( \sqrt{F_\epsilon(x,t)}-1\right)^2dx = \int_\R \left( \sqrt{f_\epsilon(x,t)} - \sqrt{f^\infty_\epsilon(x)}\right)^2dx, 
\]
namely with the square root of the Hellinger distance 
\be
\label{eq:dH}
d_H(f,g) =\left( \int_\R \left(\sqrt{f(x)} - \sqrt{g(x)}\right)^2\right)^{1/2}.
\ee
Hence, applying Theorem \ref{th:3} to $\Phi(x) = (\sqrt{x}-1)^2$,  we conclude that the Hellinger distance is monotonically decreasing \cite{FPTT}
\be
\label{eq:dH_mon}
\dfrac{d}{dt} d_H(f_\epsilon,f^\infty_\epsilon) \le 0. 
\ee
Following \cite{FPTT}, we now resort to an inequality relating the differential expression of the stationary solution of the Fokker--Planck equation \fer{eq:model2}.

\begin{theorem}[Chernoff with weight \cite{FPTT}]
Let $X$ be a random variable with density $f^\infty_\epsilon(x)$, $x \in \R$, where $f^\infty_\epsilon $ is solution to 
\[
\partial_x (\kappa_\epsilon f^\infty_\epsilon(x)) + (x-x_0)f^\infty_\epsilon(x) = 0, \qquad x \in \R. 
\]
For any given  absolutely continuous function $\psi(x)$, with in $\R$ and $\textrm{Var}[\psi(X)]<+\infty$
\[
\textrm{Var}[\psi(X)] \le \mathbb E[\kappa(X)\left(\psi^\prime(X)\right)^2]
\]
\end{theorem}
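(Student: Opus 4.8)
\textit{Proof proposal.} The plan is to recognise the asserted estimate as a weighted Poincar\'e inequality, with \emph{sharp} constant $1$, for the Dirichlet form $\psi \mapsto \mathbb E[\kappa_\epsilon(X)(\psi'(X))^2]$ naturally associated with the Fokker--Planck equation \eqref{eq:model2_epsilon}, and to prove it by the classical ``ground state'' substitution. The structural fact that makes this work is that, writing $\mathcal L\phi = \kappa_\epsilon \phi'' - (x-x_0)\phi'$ for the backward operator of \eqref{eq:model2_epsilon}, the affine function $x \mapsto x-x_0$ satisfies $\mathcal L(x-x_0) = -(x-x_0)$, i.e.\ it is an eigenfunction with eigenvalue $-1$; this is precisely the extremal in the inequality, and it suggests factoring a general $\psi$ through $x-x_0$.

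First I would fix a convenient normalisation. Since both sides are unchanged when $\psi$ is replaced by $\psi-\psi(x_0)$ and are translation invariant, I may assume $x_0=0$ and $\psi(0)=0$; this only weakens the left-hand side, because $\textrm{Var}[\psi(X)] \le \mathbb E[\psi(X)^2]$. If $\mathbb E[\kappa_\epsilon(X)(\psi'(X))^2]=+\infty$ there is nothing to prove, so assume it finite, whence $\psi' \in L^2_{\mathrm{loc}}(\R)$ (recall $\kappa_\epsilon>0$ and $f^\infty_\epsilon$ is bounded below on compacts). By truncation and mollification it then suffices to establish the bound for $\psi \in C^2(\R)$ with $\psi(0)=0$ and $\psi,\psi'$ of at most polynomial growth, the general case following by density; here one uses that $f^\infty_\epsilon$ has Gaussian tails (it is a multiple of $e^{-(x-x_0)^2/2\sigma^2}$ for $|x-x_0|\ge\delta$, where $\kappa_\epsilon\equiv\sigma^2$), so all its moments are finite and the boundary terms below vanish.

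The heart of the argument is then a one-line identity. Set $g(x)=\psi(x)/x$, which extends to a $C^1$ function across $x=0$ since $\psi(0)=0$, so that $\psi'=g+xg'$. Expanding $(g+xg')^2$ and integrating the cross term $\int_\R \kappa_\epsilon\, x\,(g^2)'\, f^\infty_\epsilon\,dx$ by parts, using the stationary relation $(\kappa_\epsilon f^\infty_\epsilon)'=-x f^\infty_\epsilon$ — which gives $\frac{d}{dx}\big(\kappa_\epsilon\, x\, f^\infty_\epsilon\big)=(\kappa_\epsilon-x^2)f^\infty_\epsilon$ — the two $\int \kappa_\epsilon g^2 f^\infty_\epsilon$ contributions cancel and one is left with
\[
\mathbb E\!\left[\kappa_\epsilon(X)\big(\psi'(X)\big)^2\right] = \mathbb E\big[\psi(X)^2\big] + \int_\R \kappa_\epsilon(x)\, x^2\, g'(x)^2\, f^\infty_\epsilon(x)\,dx .
\]
Since $\kappa_\epsilon>0$, the last integral is nonnegative, so $\mathbb E[\psi(X)^2] \le \mathbb E[\kappa_\epsilon(X)(\psi'(X))^2]$, and combining with $\textrm{Var}[\psi(X)] \le \mathbb E[\psi(X)^2]$ yields the claim. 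Equality forces $g$ constant on the support of $f^\infty_\epsilon$ together with $\mathbb E[\psi(X)]=0$, i.e.\ $\psi(x)=c(x-x_0)$; and indeed multiplying the stationary equation by $x-x_0$ and integrating shows $\mathbb E[(X-x_0)^2]=\mathbb E[\kappa_\epsilon(X)]$, so the constant $1$ is optimal.

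The substance is the above algebraic identity; the only genuinely delicate points are functional-analytic bookkeeping — justifying the integration by parts (decay of the boundary terms at $\pm\infty$) and controlling $g=\psi/(x-x_0)$ near $x=x_0$ for a merely absolutely continuous $\psi$. I expect this approximation step to be the main obstacle, and I would handle it by first proving the inequality on the dense class of smooth, polynomially bounded test functions vanishing at $x_0$ and then passing to the limit, using the finiteness of all moments of $f^\infty_\epsilon$ together with lower semicontinuity of the Dirichlet form.
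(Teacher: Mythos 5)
The paper does not actually prove this statement: it is quoted verbatim from \cite{FPTT}, so there is no internal proof to compare against. Judged on its own terms, your argument is correct and self-contained. I checked the central identity: with $x_0=0$, $\psi(0)=0$, $g=\psi/x$, expanding $\kappa_\epsilon(g+xg')^2$ and integrating the cross term $\int\kappa_\epsilon x (g^2)' f^\infty_\epsilon\,dx$ by parts against $\frac{d}{dx}(\kappa_\epsilon x f^\infty_\epsilon)=(\kappa_\epsilon-x^2)f^\infty_\epsilon$ (a direct consequence of the stationarity relation $(\kappa_\epsilon f^\infty_\epsilon)'=-xf^\infty_\epsilon$) does yield
\[
\mathbb E\bigl[\kappa_\epsilon(X)(\psi'(X))^2\bigr]=\mathbb E\bigl[\psi(X)^2\bigr]+\int_\R \kappa_\epsilon(x)\,x^2\,(g'(x))^2 f^\infty_\epsilon(x)\,dx,
\]
which is stronger than the claim since $\mathrm{Var}[\psi(X)]\le\mathbb E[\psi(X)^2]$ and both sides of the theorem are unaffected by subtracting $\psi(x_0)$. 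This ground-state factorisation through the eigenfunction $x-x_0$ is a genuinely different route from the one in the cited literature, where weighted Chernoff inequalities of this type are usually obtained by a covariance representation plus Cauchy--Schwarz and Fubini in the spirit of Chernoff and Klaassen; your version has the advantage of being elementary, of exhibiting the extremals $\psi=c(x-x_0)$ explicitly (consistent with $\mathbb E[(X-x_0)^2]=\mathbb E[\kappa_\epsilon(X)]$), and of producing the stronger second-moment bound.

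Two caveats. First, the only real work left is the step you yourself flag: the passage from smooth, polynomially bounded $\psi$ to merely absolutely continuous $\psi$ with finite variance. This is standard (truncate and mollify $\psi'$, use Fatou for $\liminf_n\mathrm{Var}[\psi_n]\ge\mathrm{Var}[\psi]$ after controlling the means $\mathbb E[\psi_n(X)]$, and dominated convergence for the Dirichlet form), and the Gaussian tails of $f^\infty_\epsilon$ outside $|x-x_0|\ge\delta+\epsilon$, where $\kappa_\epsilon\equiv\sigma^2$, do kill the boundary terms for that class; but as written it is a sketch, not a proof. Second, note that the statement in the paper writes $\kappa$ rather than $\kappa_\epsilon$ inside the expectation; your proof correctly produces $\kappa_\epsilon$, which is evidently what is meant and what is used afterwards with $\psi=\sqrt{f_\epsilon/f^\infty_\epsilon}$.
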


Applying Chernoff inequality with $\psi(\cdot) = \sqrt{f_\epsilon(\cdot,t)/f_\epsilon^\infty(\cdot)}$ we thus obtain
\[
  4 \left(1-\left(\int_\R \sqrt{f_\epsilon(x,t) f^\infty_\epsilon(x)}dx \right)^2\right) \le 4\int_\R \kappa_\epsilon(x) f_\epsilon^\infty(x) |\partial_x \psi(x)|^2 dx = I_S(f_\epsilon,f^\infty_\epsilon). 
\]
We can now relate the left-hand side of the inequality above  with the Hellinger distance by resorting to the following argument.  For any pair $f,g$ of probability density functions we have the inequality
\[
\begin{split}
d_H^2(f,g) &= \int_\R \left(f(x) + g(x)-2\sqrt{f(x)g(x)}dx \right)dx \\
&= 2 \left(1- \int_\R \sqrt{f(x)g(x)}dx \right) \le 2 \left(1- \left(\int_\R \sqrt{f(x)g(x)}dx \right)^2\right). 
\end{split}\]
Indeed, by Cauchy-Schwarz inequality we have
\[
\int_\R \sqrt{f(x)} \sqrt{g(x)}dx \le 1. 
\]
Finally, we have the bound
\[
 2d_H^2(f_\epsilon,f^\infty_\epsilon) \le I_S(f_\epsilon,f^\infty_\epsilon),
\]
and the decay of the Shannon relative entropy coupled with the previous bound implies 
\[
\dfrac{d}{dt} H(f_\epsilon,f^\infty_\epsilon) = -I_S(f_\epsilon,f^\infty_\epsilon) \le -2d_H^2(f_\epsilon,f^\infty_\epsilon). 
\]
Integrating with respect to time from $0$ to $\infty$ we get
\[
2 \int_0^\infty d^2_H(f_\epsilon,f^\infty_\epsilon)dx \le H(f_\epsilon(0),f^\infty_\epsilon),
\]
which shows that $d^2_H \in L^1(\R_+)$. Furthermore, since $d_H$ is monotone in time, see \eqref{eq:dH_mon}, we have shown that
\[
d_H^2(f_\epsilon,f^\infty_\epsilon) = o\left( t^{-1} \right),\quad t\rightarrow +\infty.
\]
In particular, since the previous argument holds for any value  $\epsilon>0$, the solution to \eqref{eq:model_epsilon} converges in time towards its steady state $f_\epsilon^\infty(x)$, given by \eqref{eq:fepsilon_cond}, and  the rate of convergence is at least linear in time. \rev{ We highlight how, as shown in \cite{TT} to which the interested reader is referred for details, we can always lift both the initial value and the equilibrium solution of the Fokker-Planck type equation \fer{eq:model_epsilon} to rigorously apply the previous strategy, and subsequently remove the lifting without loosing the convergence rate}.


\section{Numerical results}
\label{sec:num}

In this section we investigate numerically the trends to equilibrium of the introduced  models. We remark that the model with nonconstant diffusion \eqref{eq:FP}, with the diffusion function $\kappa(x)$ defined in \eqref{eq:k}, is equivalent to the initial model \eqref{eq:model} with discontinuous drift  \eqref{eq:drift}.  In particular, we will concentrate on the regularized problem embedding a precise form of $\kappa_\epsilon(x)$ for which we have no information on the analytical form of the large time behavior for each $\epsilon>0$. Hence, we apply a class of  recently developed structure preserving (SP)  schemes for Fokker-Planck-type equations, see \cite{PZ}. These methods are capable to reproduce large times statistical properties of the exact steady state with arbitrary accuracy, together with the preservation of positivity of the solution and a consistent entropy dissipation. This class of schemes has been extended to models with nonconstant diffusion matrices in \cite{LZ} and to preserve positivity of  stochastic Galerkin reformulations of linear problems in \cite{Z}. In order to be self-consistent, we summarize the main features of SP methods in 1D. The extension to higher dimensional problems will be treated later. 

Let us rewrite the original Fokker-Planck model \eqref{eq:FP} in flux form 
\[
\partial_t f(x,t) =  \partial_{x}\mathcal F[f](x,t),
\]
where
\be
\label{eq:an_flux}
\mathcal F[f](x,t) = \mathcal C(x)f(x,t) + \kappa_\epsilon(x) \partial_x f(x,t). 
\ee
is the flux, and we introduced the drift function $\mathcal C(x) = (x-x_0) + \partial_x \kappa_\epsilon(x)$.  Hence, introducing a uniform discretization of the domain $\{x_i\}_{i=1}^N$ with $\Delta x = x_{i+1}-x_i>0$ constant, and denoting $x_{i+1/2} = x_i + \Delta x/2$, we define the following conservative discretization 
\[
\dfrac{d}{dt} f_i(t) = \dfrac{\mathcal F_{i+1/2}[f](t) - \mathcal F_{i-1/2}[f](t)}{\Delta x}, \qquad i = 1,\dots,N.
\]
In particular, choose the numerical flux of the form 
\be
\label{eq:num_flux}
\mathcal F_{i+1/2}[f] = \tilde{\mathcal C}_{i+1/2}\tilde{f}_{i+1/2} + \kappa_{\epsilon}(x_{i+1/2})\dfrac{f_{i+1}-f_{i}}{\Delta x},
\ee
where $\tilde{f}_{i+1/2} $ is a convex combination of the values of $f$ in two adjacent cells $i$ and $i+1$, i.e.
\[
\tilde{f}_{i+1/2}  = (1-\delta_{i+1/2})f_{i+1} + \delta_{i+1/2}f_i,
\]
being $\delta_{i+1/2}$ suitable nonlinear weights and $\tilde{\mathcal C}_{i+1/2}$ a numerical drift that are obtained in such a way the method yields nonnegative solutions and preserve the steady state of the problem with arbitrary accuracy. Setting, in particular
\be
\label{eq:Ctilde}
\tilde{\mathcal C}_{i+1/2} = \dfrac{\kappa_{\epsilon}(x_{i+1/2})}{\Delta x} \int_{x_i}^{x_{i+1}} \dfrac{(x-x_0) + \partial_x \kappa_\epsilon(x)}{\kappa_\epsilon(x)}dx,
\ee
since the numerical flux vanishes, we obtain equating the ratios $f_{i+1}/f_i$ and $f(x_{i+1},t)/f(x_i,t)$ of the numerical and exact fluxes the following definition of nonlinear weights
\be
\label{eq:delta}
\delta_{i+1/2} = \dfrac{1}{\lambda_{i+1/2}} + \dfrac{1}{1-\exp(\lambda_{i+1/2})}, 
\ee
with $\lambda_{i+1/2} = \dfrac{\Delta x\,\tilde{\mathcal C}_{i+1/2}}{\kappa_{\epsilon,i+1/2}}$. In details, the following result holds

\begin{proposition}
The numerical flux \eqref{eq:num_flux} with $\tilde{\mathcal C}_{i+1/2}$, $\delta_{i+1/2}$ defined in \eqref{eq:Ctilde}-\eqref{eq:delta} vanishes when the analytical flux \eqref{eq:an_flux} is zero in the cell $[x_i,x_{i+1}]$. Moreover $\delta_{i+1/2} \in[0,1]$ for any $i=1,\dots,N$.
\end{proposition}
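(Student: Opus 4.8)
The plan is to establish the two claims separately, beginning from the characterisation of the analytical flux on a single cell. Suppose $\mathcal F[f](x,t)=0$ for every $x\in[x_i,x_{i+1}]$. Then, reading \eqref{eq:an_flux} as the separable ODE $\partial_x\log f(x,t)=-\bigl((x-x_0)+\partial_x\kappa_\epsilon(x)\bigr)/\kappa_\epsilon(x)$ and integrating over the cell, I would obtain
\[
\log\frac{f(x_{i+1},t)}{f(x_i,t)}=-\int_{x_i}^{x_{i+1}}\frac{(x-x_0)+\partial_x\kappa_\epsilon(x)}{\kappa_\epsilon(x)}\,dx=-\lambda_{i+1/2},
\]
the last equality being exactly the definition \eqref{eq:Ctilde} of $\tilde{\mathcal C}_{i+1/2}$ combined with $\lambda_{i+1/2}=\Delta x\,\tilde{\mathcal C}_{i+1/2}/\kappa_{\epsilon,i+1/2}$. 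Hence $f_{i+1}=e^{-\lambda_{i+1/2}}f_i$.

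Substituting this relation, together with $\tilde f_{i+1/2}=(1-\delta_{i+1/2})f_{i+1}+\delta_{i+1/2}f_i$, into \eqref{eq:num_flux}, using $\tilde{\mathcal C}_{i+1/2}=\lambda_{i+1/2}\kappa_{\epsilon,i+1/2}/\Delta x$ and factoring out $\kappa_{\epsilon,i+1/2}f_i/\Delta x$, the vanishing of $\mathcal F_{i+1/2}[f]$ reduces to the scalar identity $\lambda\bigl((1-\delta)e^{-\lambda}+\delta\bigr)=1-e^{-\lambda}$ with $\lambda=\lambda_{i+1/2}$ and $\delta=\delta_{i+1/2}$. Solving this equation linearly for $\delta$ yields $\delta=\tfrac1\lambda-\tfrac{e^{-\lambda}}{1-e^{-\lambda}}=\tfrac1\lambda+\tfrac1{1-e^\lambda}$, which is precisely \eqref{eq:delta}; conversely, the weight of \eqref{eq:delta} makes the numerical flux vanish. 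This step is routine algebra once the exponential ratio $f_{i+1}/f_i$ has been identified; the only subtlety is the case $\tilde{\mathcal C}_{i+1/2}=0$ (a locally drift-free cell, $\lambda_{i+1/2}=0$), where both sides degenerate and $\delta_{i+1/2}=\tfrac12$ is recovered by a Taylor expansion of \eqref{eq:delta} at the origin.

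For the bound, I would write $\delta_{i+1/2}=g(\lambda_{i+1/2})$ with $g(\lambda)=\tfrac1\lambda+\tfrac1{1-e^\lambda}$, extended continuously by $g(0)=\tfrac12$, and show $g(\R)\subseteq[0,1]$. First I would record the limits $g(\lambda)\to 0$ as $\lambda\to+\infty$ and $g(\lambda)\to 1$ as $\lambda\to-\infty$, so that it suffices to prove $g$ monotone. Computing $g'(\lambda)=-\lambda^{-2}+e^\lambda(1-e^\lambda)^{-2}$, the inequality $g'\le 0$ is equivalent to $\lambda^2 e^\lambda\le(e^\lambda-1)^2$, i.e., after taking (nonnegative) square roots, to $|\lambda|\,e^{\lambda/2}\le|e^\lambda-1|$; writing $e^\lambda-1=e^{\lambda/2}(e^{\lambda/2}-e^{-\lambda/2})$ and setting $u=\lambda/2$, this is nothing but the elementary inequality $|u|\le|\sinh u|$. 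Monotonicity then confines $g$ strictly between its two limiting values, giving $\delta_{i+1/2}\in(0,1)$ for $\lambda_{i+1/2}\neq0$ and $\delta_{i+1/2}=\tfrac12$ otherwise, which in particular proves $\delta_{i+1/2}\in[0,1]$.

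The only genuinely non-mechanical point is the reduction of $g'\le0$ to $|u|\le|\sinh u|$; everything else amounts to integrating a separable ODE, solving a linear equation in one unknown, and computing two limits, so I expect no serious obstacle. A minor but necessary care is the removable singularity of $g$ at $\lambda=0$, which must be handled as a separate (trivial) case both in the flux identity and in the bound.
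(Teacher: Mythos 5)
Your proof is correct, and it follows essentially the standard argument: the paper itself omits the proof and defers to \cite{PZ}, where $\delta_{i+1/2}$ is obtained exactly as you do, by imposing that the numerical flux vanish when $f_{i+1}/f_i$ equals the exact quasi-stationary ratio $\exp\{-\lambda_{i+1/2}\}$, and the bound $\delta_{i+1/2}\in[0,1]$ follows from the monotonicity and the limits at $\pm\infty$ of $\lambda \mapsto 1/\lambda + 1/(1-e^{\lambda})$. Your reduction of the monotonicity to $|u|\le |\sinh u|$ and your treatment of the removable singularity at $\lambda_{i+1/2}=0$ are both sound.
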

For a detailed proof we point the interested reader to \cite{PZ}. The introduced method admits an equivalent entropic formulation based on the entropy dissipation principle.  In this case, the numerical flux should be defined as
\[
\mathcal F_{i+1/2}^E = \kappa_{\epsilon,i+1/2} \tilde{f}^E_{i+1/2}\left( \dfrac{\tilde{\mathcal C}_{i+1/2}}{\kappa_{\epsilon,i+1/2}}+\dfrac{\log f_{i+1}-\log f_i}{\Delta x}\right),
\]
where
\[
\tilde{f}^E_{i+1/2} = 
\begin{cases}
\dfrac{f_{i+1}-f_i}{\log f_{i+1}-\log f_i} & f_{i+1} \ne f_i \\
f_{i+1} & f_{i+1} = f_i, 
\end{cases}
\]
being $\tilde{\mathcal C}_{i+1/2}$ defined as in \eqref{eq:Ctilde}.
\subsection{Convergence towards equilibrium}

In this section, we provide numerical evidence of the convergence of $f_\epsilon$, solution to the  model \eqref{eq:model_epsilon}, towards the steady state \eqref{eq:finf} corresponding to the equilibrium solution to \eqref{eq:model}--\fer{eq:model2}. We use  the following regularization $\kappa_\epsilon(x)$, $\epsilon >0$, of the function $\kappa(x)$
\be\label{eq:kepsi}
\kappa_\epsilon(x) = 
\begin{cases}
\sigma^2 & |x-x_0| >  \delta+ \epsilon \\
p(x)&  \delta-\epsilon < |x-x_0| <  \delta+\epsilon \\
\sigma^2+ \frac{\delta^2}{2} - \frac{1}{2}|x-x_0|^2 & |x-x_0|<\delta-\epsilon\\
p(x)&  -\delta -\epsilon < |x-x_0| <  -\delta +\epsilon 
\end{cases}
\ee
being $p(x) = a |x-x_0|^3 + b |x-x_0|^2 + c |x-x_0| + d$, and  $a,b,c,d \in \R$ are  solution of the linear system
\[
\begin{cases}
p(\delta + \epsilon ) = \sigma^2 \\
p(\delta -\epsilon) = \sigma^2 + \frac{ \delta^2}{2} - \frac{(\epsilon - \delta)^2}{2} \\
p^\prime(\delta + \epsilon ) = 0 \\
p^\prime(\delta - \epsilon ) = \epsilon - \delta,
\end{cases}
\]
that follows by imposing continuity of $p$ and of its derivatives at the boundaries.  


In Figure \ref{fig:moll} we represent the obtained $\mathcal C^1(\R)$ regularization of the local diffusion function $\kappa(x)$ provided by $\kappa_\epsilon(x)$ for several values of $\epsilon>0$.  

\begin{figure}
\centering
\includegraphics[scale = 0.37]{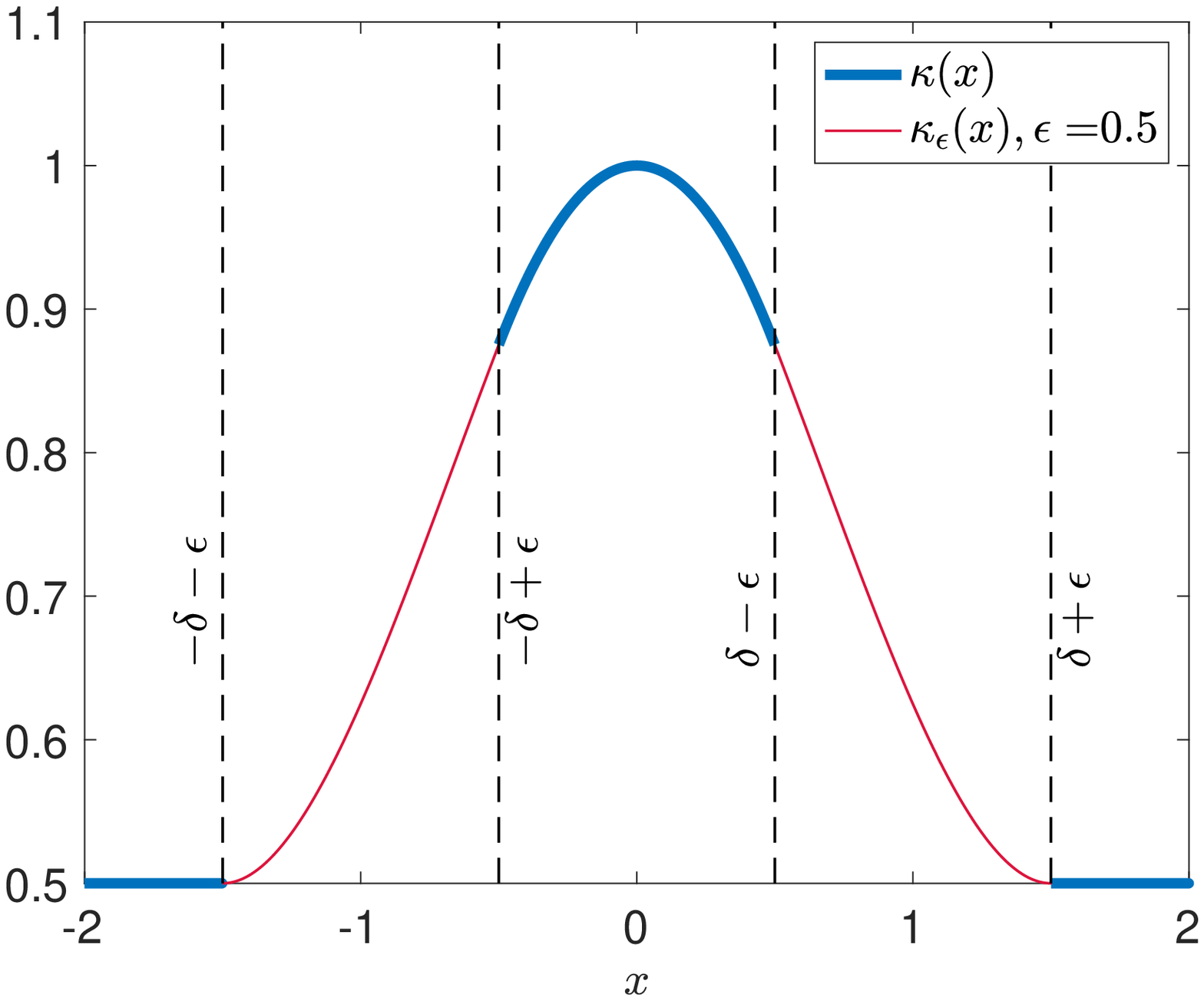}
\includegraphics[scale = 0.37]{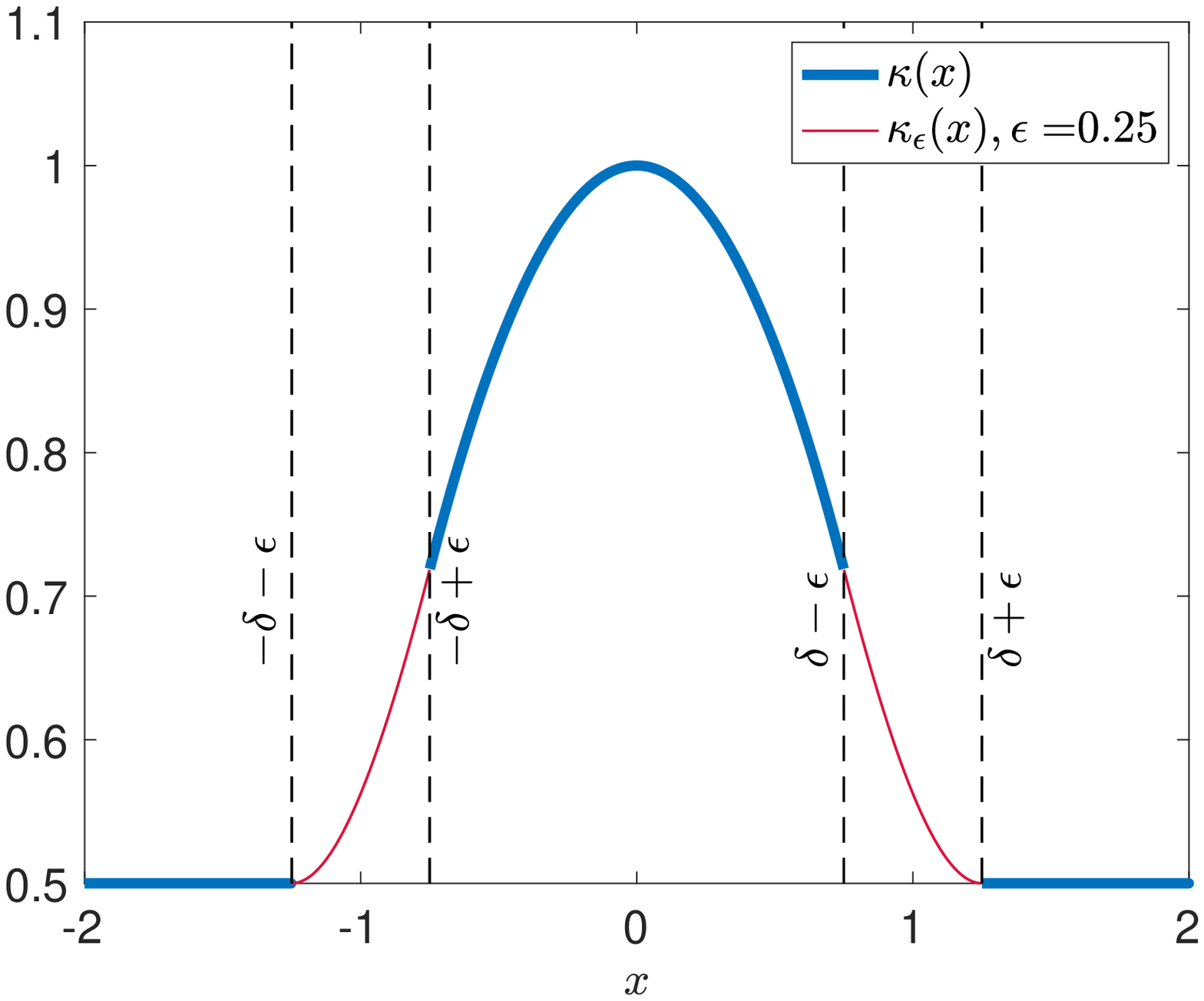} \\
\includegraphics[scale = 0.37]{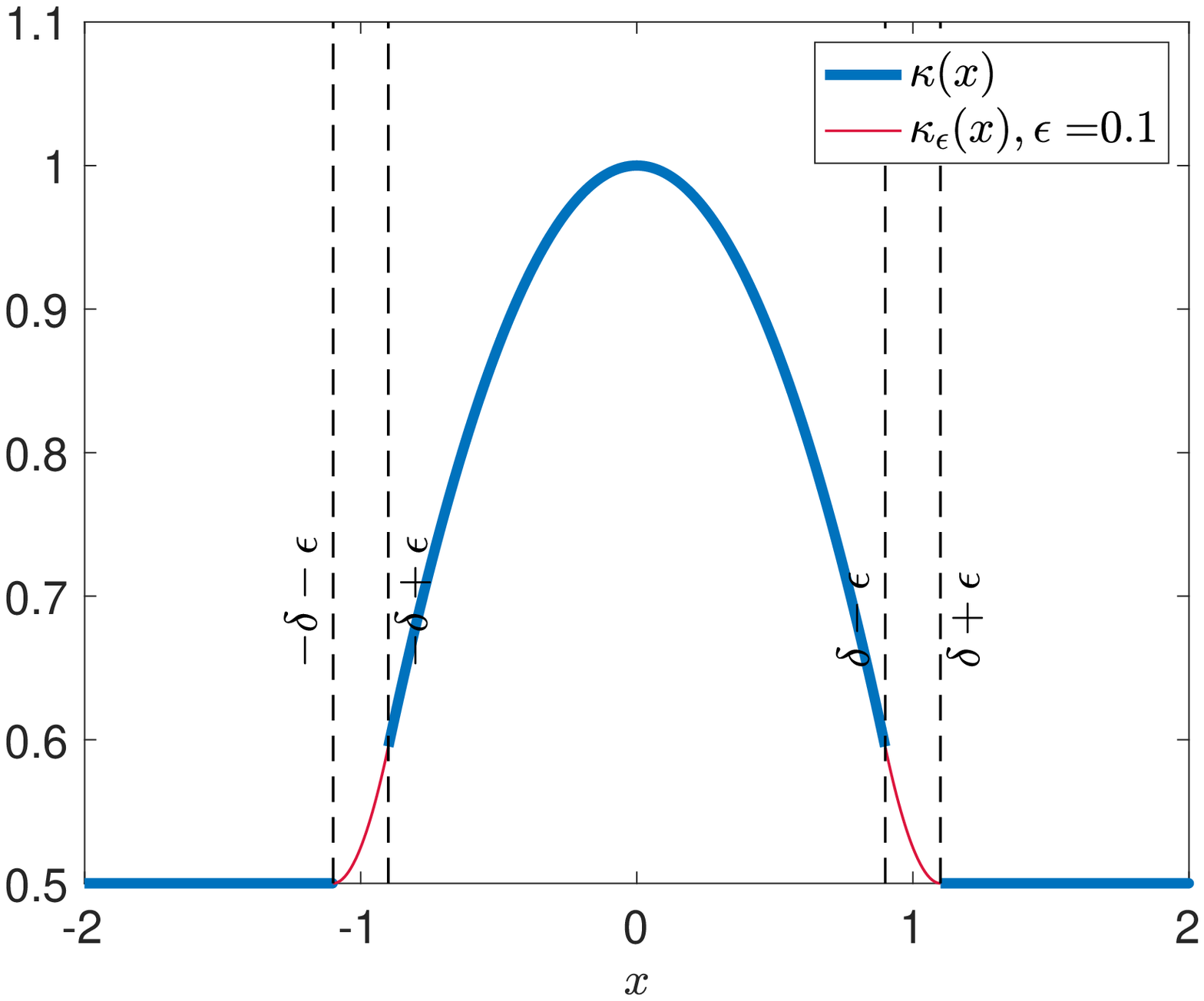}
\includegraphics[scale = 0.37]{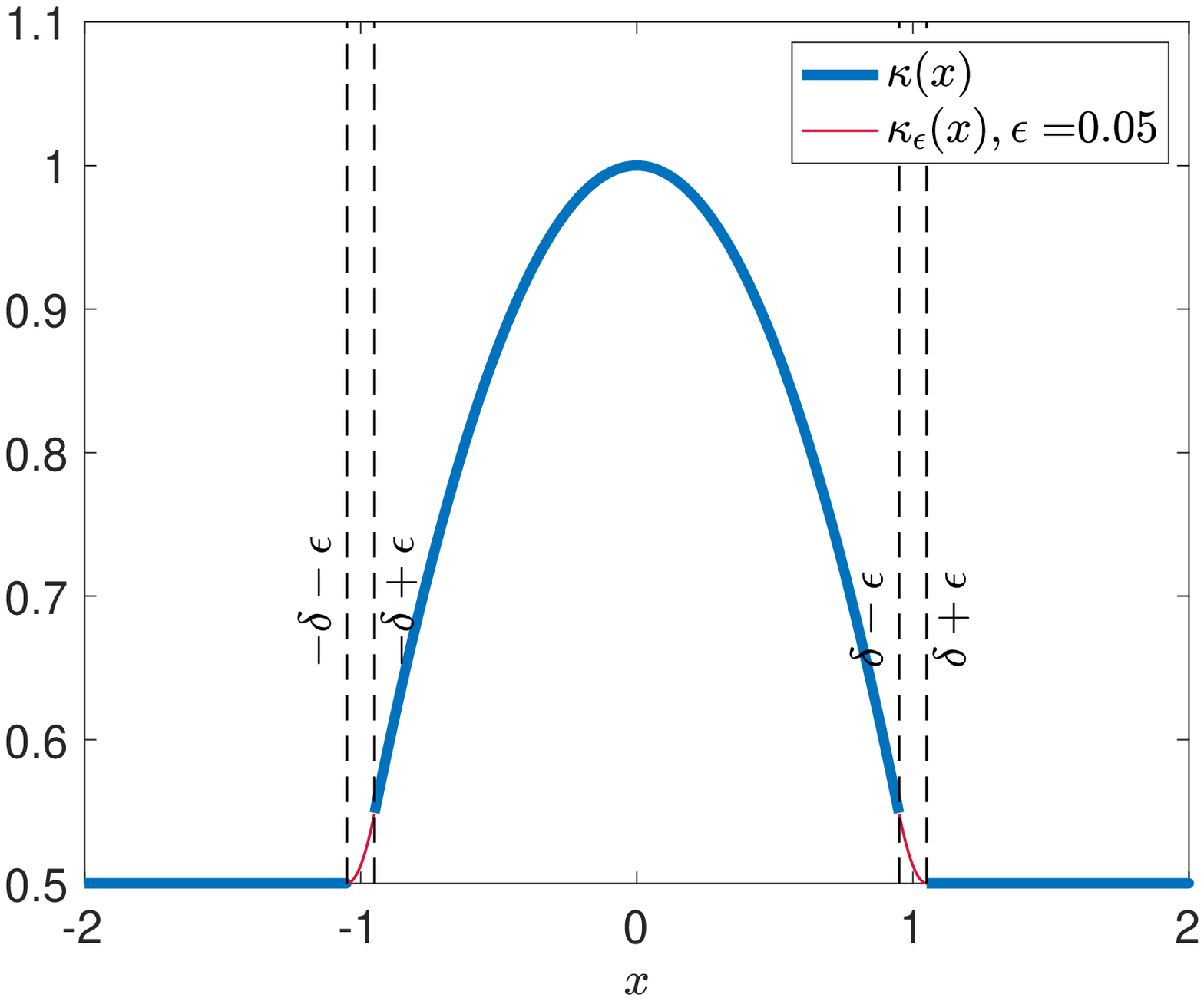}
\caption{Regularization $\kappa_\epsilon(x)\in \mathcal C^1(\R)$ of the local diffusion function $\kappa(x)$ defined by \eqref{eq:kepsi}. We considered $x_0=0$ and $\delta = 1$, $\sigma^2=1$ and several values of $\epsilon>0$.  }
\label{fig:moll}
\end{figure}

Once defined the regularized diffusion coefficient $\kappa_\epsilon(x)$ we focus on the approximation of the equilibrium distribution \eqref{eq:finf} by means of the surrogate model \eqref{eq:model_epsilon}. Since in this case the large time solution is unknown we apply structure preserving numerical methods for Fokker-Planck-type equations, see \cite{PZ}. We consider as initial distribution 
\begin{equation}
\label{eq:f0_num}
f(x,0) = \beta \left[\exp(-c(x+2)) + \exp(-c(x-2)) \right], \qquad c = 10,
\end{equation}
with $\beta>0$ a given parameter such that $\int_{\mathbb R}f(x,0)dx = 1$. Hence, we introduce a discretization of the domain $[-L,L]$, $L = 5$, obtained with \rev{$N = 81$} gridpoints such that $\Delta x = 2L/(N-1)$, and a uniform time discretization with $\Delta t = \Delta x^2/L^2$. The time integration has been performed with a standard RK4 method. In Figure \ref{fig:SP} we represent the obtained numerical approximation of $f_\epsilon^\infty(x)$ for several $\epsilon>0$ and the evolution of the relative error
\[
\textrm{Error}_\epsilon(t) = \int_{\R}\dfrac{|f_\epsilon(x,t)-f^\infty(x)|}{|f^\infty(x)|}dx, \qquad t\in [0,T].
\]
We may easily observe the consistency of the approximation of the large time solution in terms of $\epsilon>0$. In particular, for small $\epsilon>0$, the surrogate model with $\kappa_\epsilon(x) \in \mathcal C^1(\R)$ is capable to correctly approximate the analytical steady state \eqref{eq:finf}. 

\begin{figure}
\centering
\includegraphics[scale= 0.35]{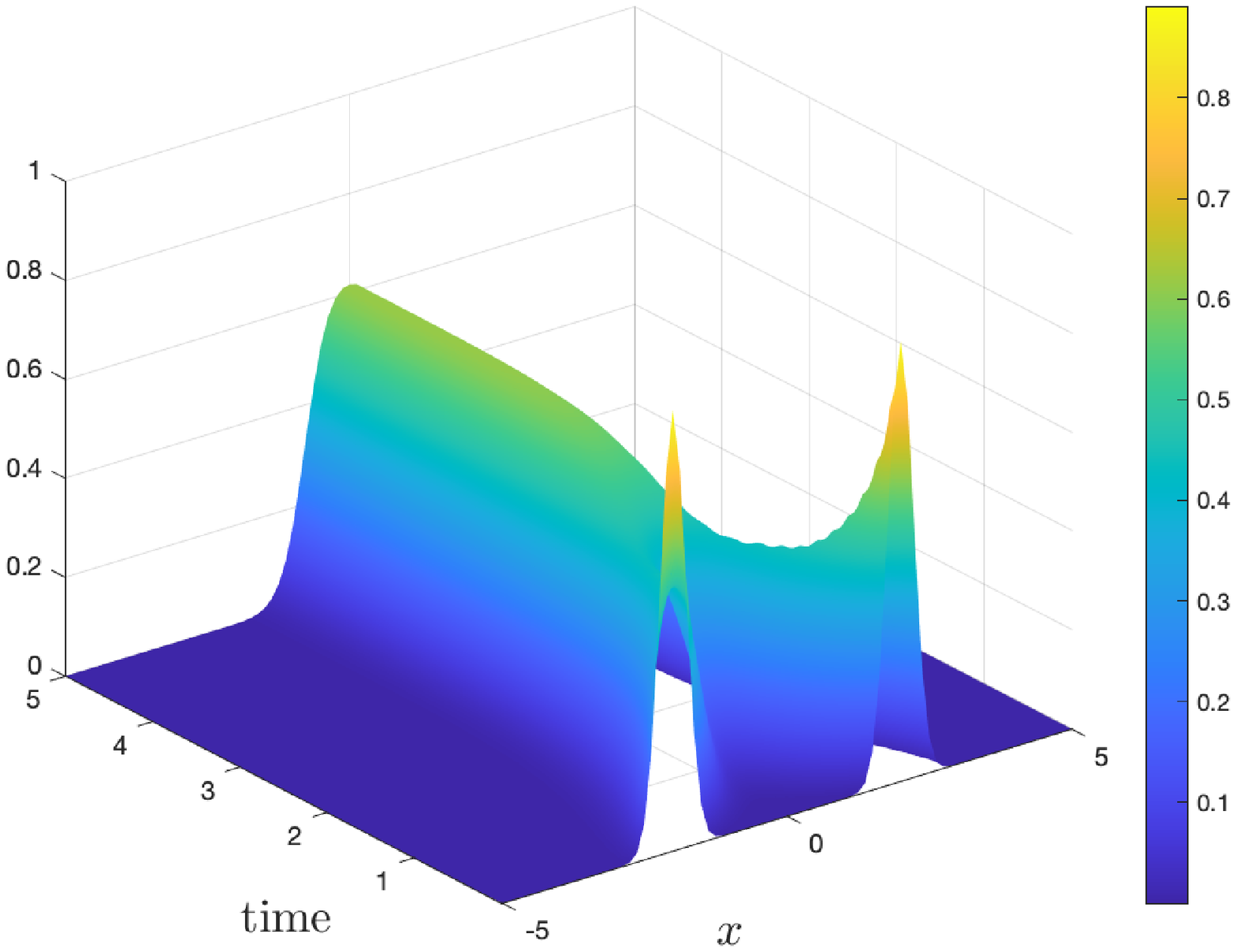}
\includegraphics[scale= 0.35]{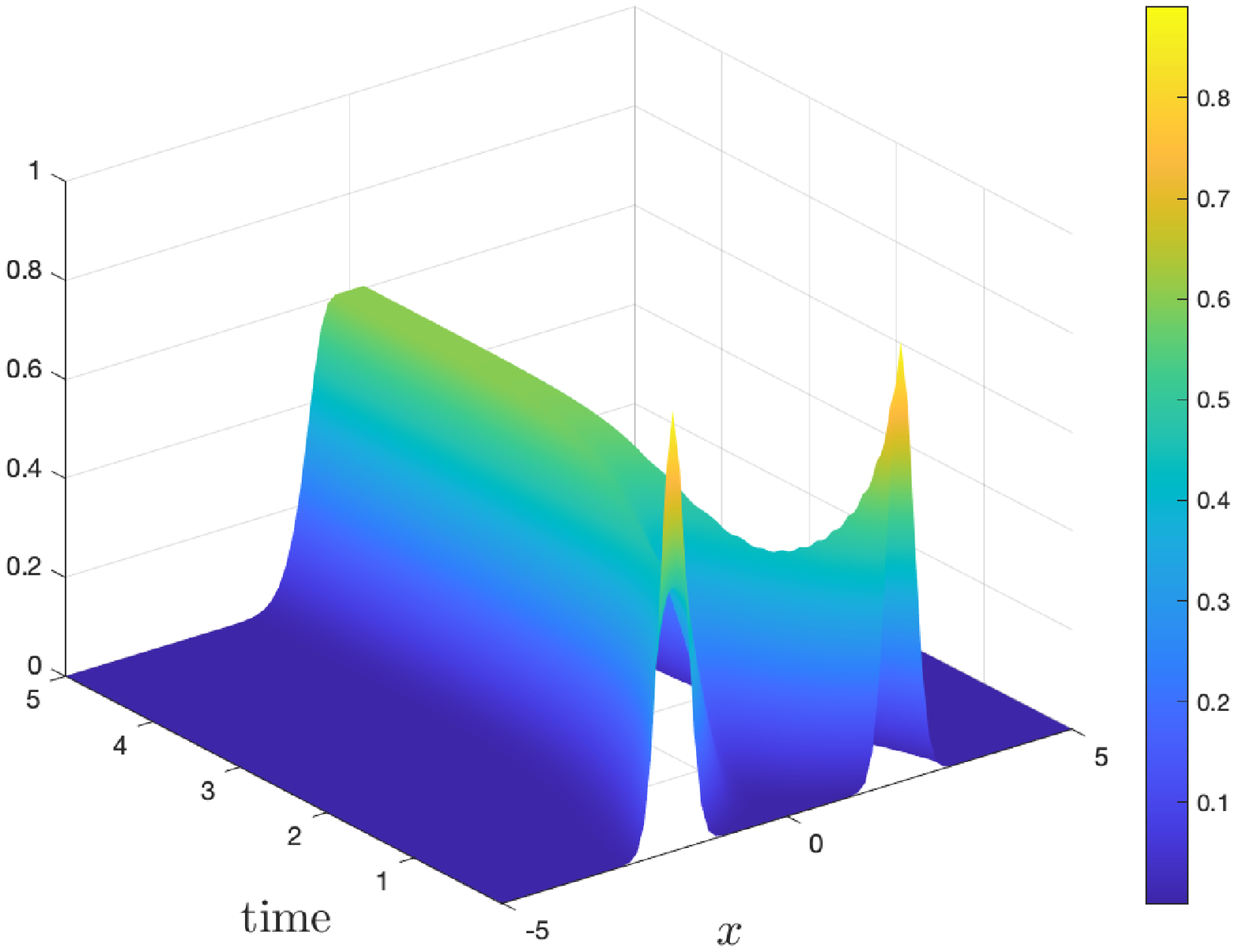} \\
\includegraphics[scale= 0.35]{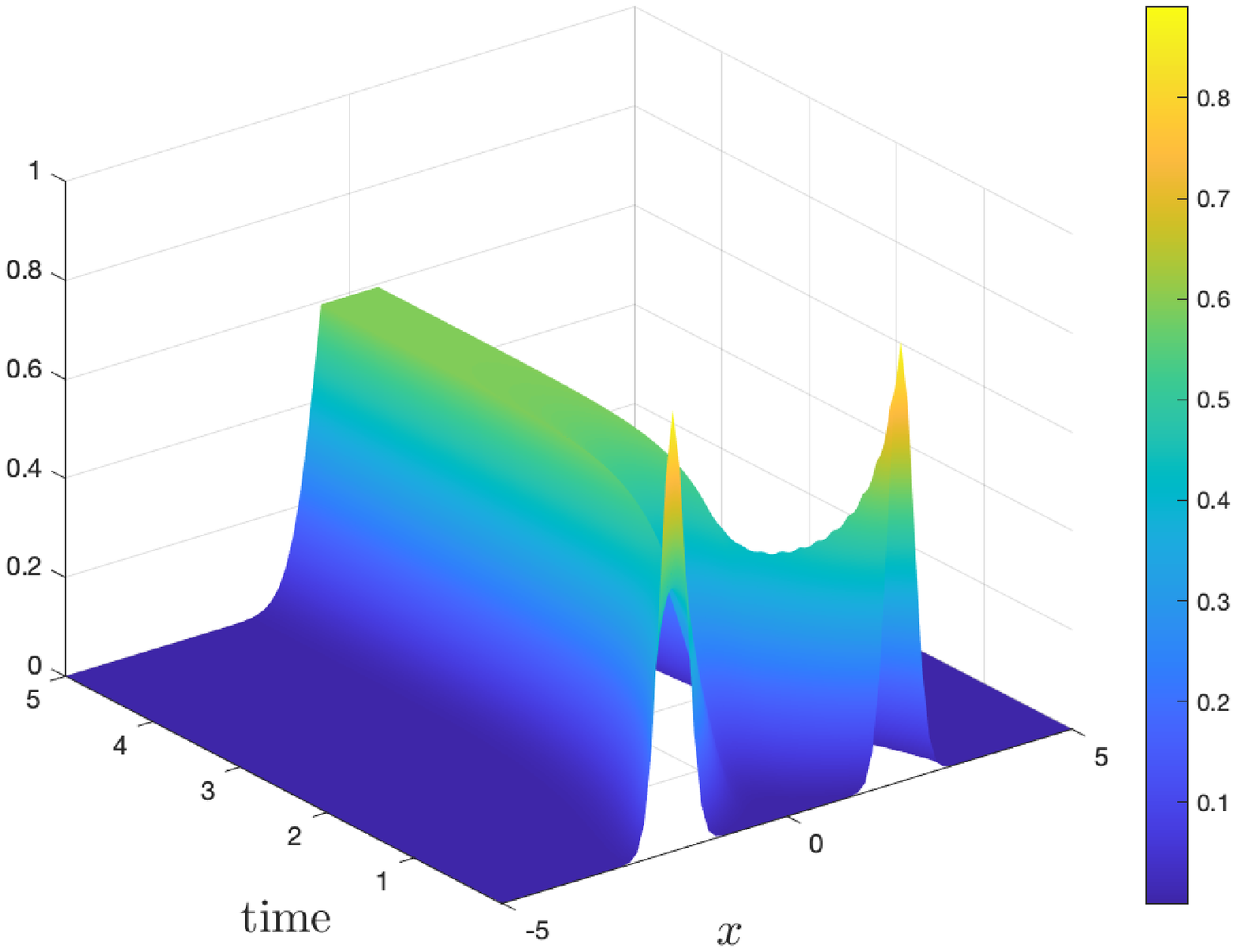} 
\includegraphics[scale = 0.35]{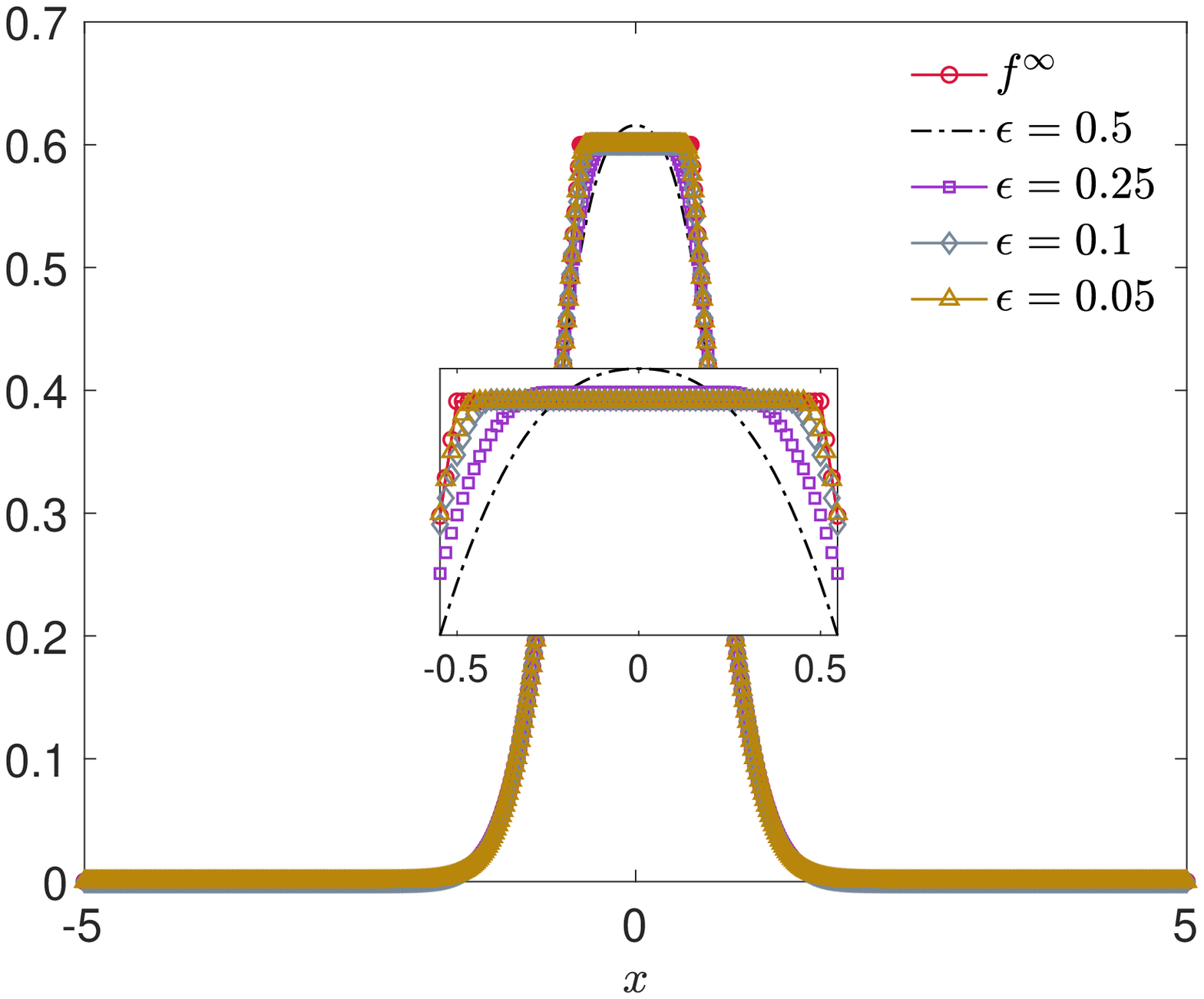}
\caption{Top and bottom left: evolution of $f_\epsilon(x,t)$ for $\epsilon = 0.5$ (top left), $\epsilon = 0.25$ (top right) and $\epsilon = 0.05$ (bottom left) and obtained through the SP scheme for the problem \eqref{eq:model_epsilon} over the time interval $[0,5]$ \rev{and initial distribution \eqref{eq:f0_num}}. We considered the domain $[-5,5]$ discretized by $N = 81$ gridpoints and $\Delta t = \Delta x^2/L^2$. Bottom right: comparison of the large time numerical solution of \eqref{eq:model_epsilon} at time $T= 10$, i.e. $f_\epsilon(x,T)$, with the analytical steady state $f^\infty(x)$ of the initial problem.   }
\label{fig:SP}
\end{figure}

\begin{figure}
\centering
\includegraphics[scale = 0.35]{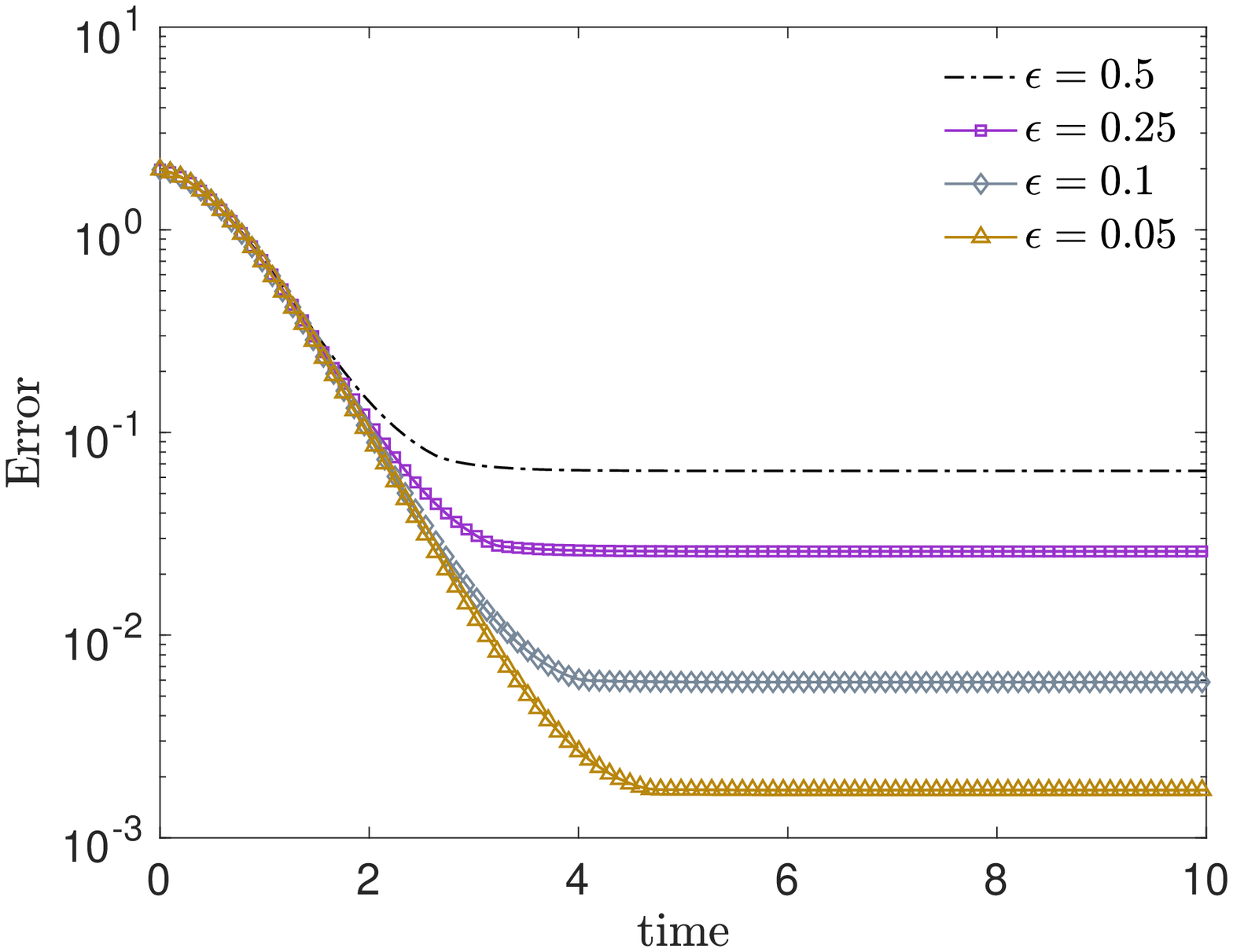}
\includegraphics[scale = 0.35]{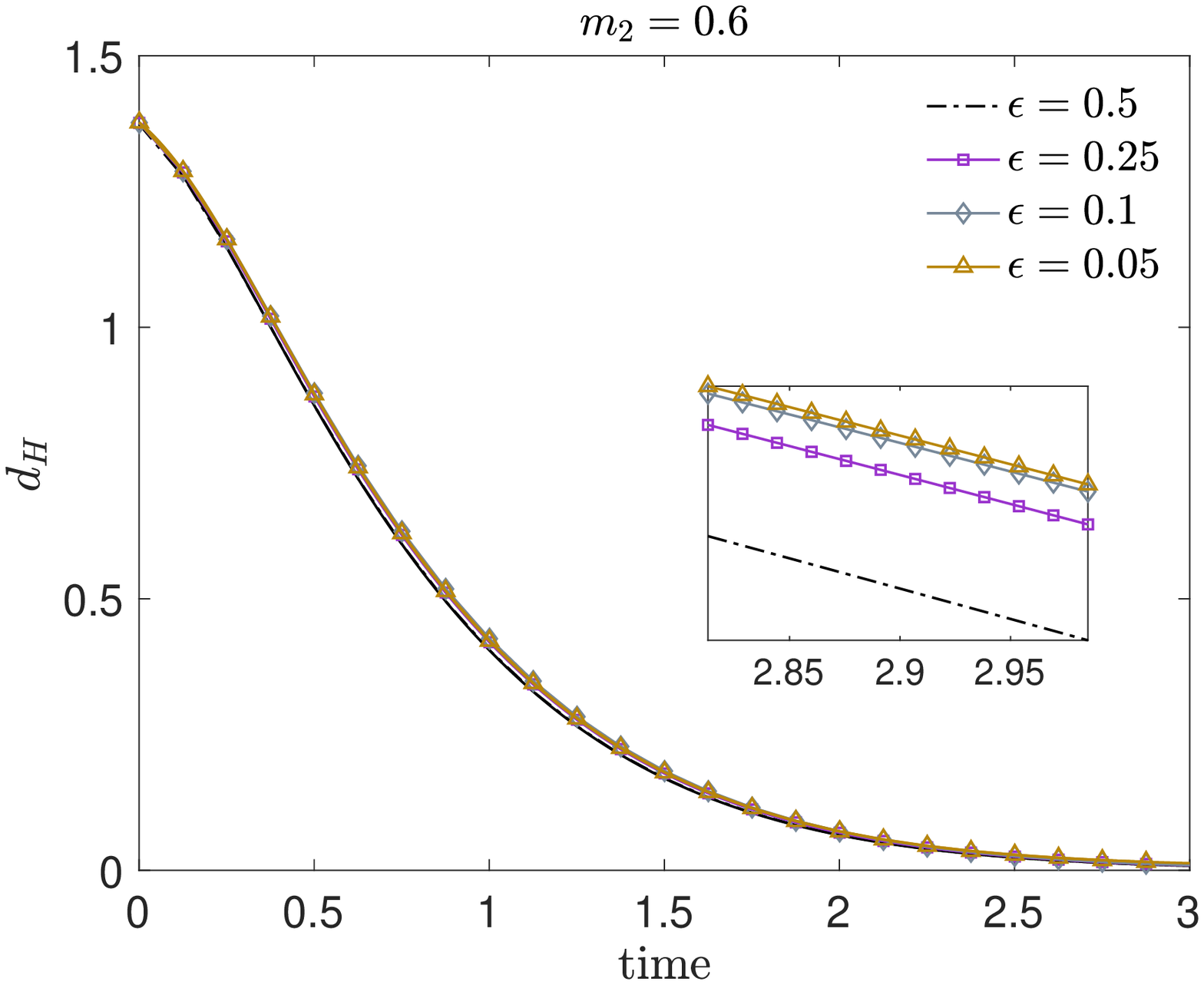}
\caption{\rev{Right:} evolution of the Hellinger distance $d_H$ over the time horizon $[0,3]$ for several choices of $\epsilon>0$ \rev{and computed for the problem \eqref{eq:model_epsilon} with $D = \{x \in \mathbb{R}: |x|\le \frac{1}{2}\}$ and $m_2 = 0.6$}. \rev{Left:} evolution of  $L^1$ relative error for several values of $\epsilon>0$.}
\label{fig:dH}
\end{figure}
In Figure \ref{fig:dH} we report the evolution of the Hellinger's distance $d_H$ defined in \eqref{eq:dH} for several values of $\epsilon= 0.05,0.1,0.25,0.5$. The distance $d_H$ has been computed with respect to a reference steady state \rev{corresponding to the case of a domain $D = \{x \in \mathbb R: |x |\le \frac{1}{2}\}$. The diffusion coefficient $\sigma^2>0$ has been determined as solution to \eqref{eq:sys2} with $m_2 = 0.6$}. In particular, we considered the numerical large time distributions $f_\epsilon(x,T)$, $T = 10$, for each $\epsilon>0$, computed with $N = 641$ gridpoints. In particular, for sufficiently small values of $\epsilon>0$, we get an approximation of $f^\infty(x)$ in \eqref{eq:finf} obtained through the surrogate model  with nonconstant diffusion \eqref{eq:model_epsilon} whose solution converges towards an $\epsilon$-dependent equilibrium. 

\rev{
\begin{figure}
\centering
\includegraphics[scale = 0.35]{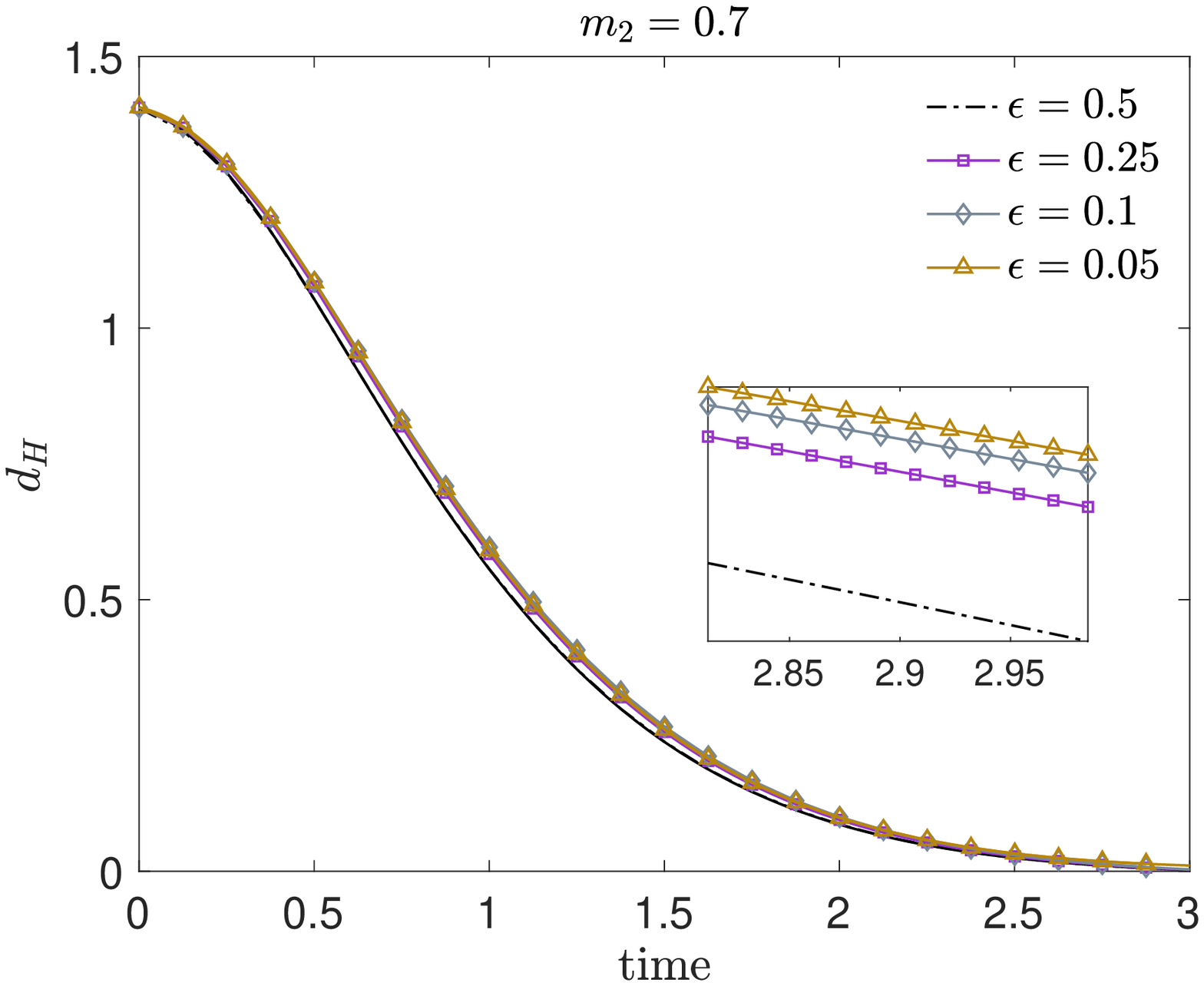}
\includegraphics[scale = 0.35]{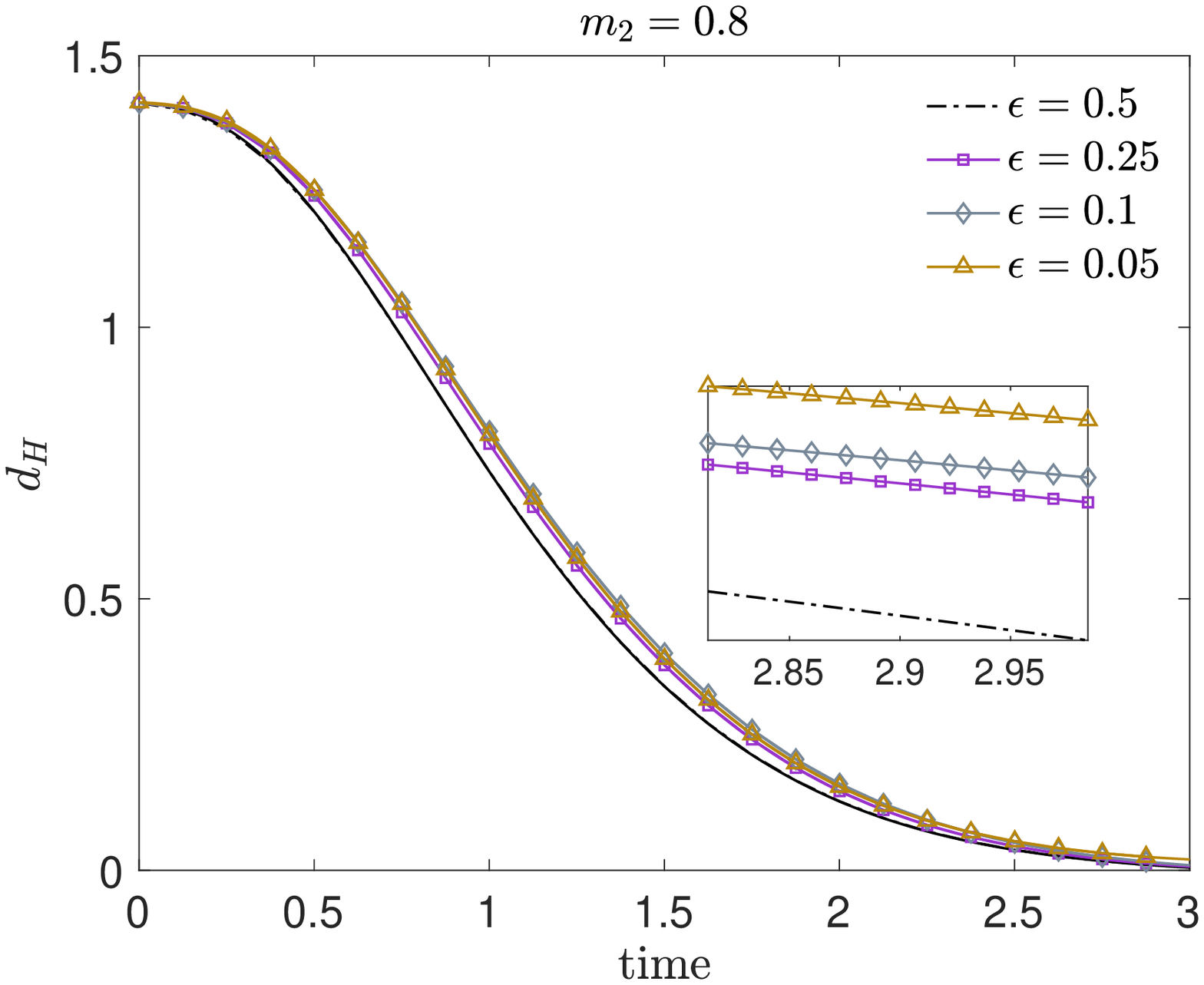}
\caption{\rev{Evolution of the Hellinger distance $d_H$ over the time horizon $[0,3]$ for several $\epsilon>0$ for the problem \eqref{eq:model_epsilon} and $m_2 = 0.7$ (left) or $m_2 = 0.8$ (right). }}
\label{fig:dH_m2}
\end{figure}
In Figure \ref{fig:dH_m2} we represent the evolution of the quantity $d_H$ computed as above with respect to a reference large time solution obtained at time $T = 10$ with $N = 641$ gridpoints and for two choices of the constant $m_2= 0.7,0.8$. As before, the domain $D$ corresponds to the set $\{x \in \mathbb R: |x| \le \frac{1}{2}\}$. We recall that $m_2>0$ is an input data of system \eqref{eq:sys2} and determines the fraction of mass in $D$, we value of $\sigma^2>0$ is determined as the (unique) solution of this system. We can observe that for increasing values of $m_2$, and therefore for vanishing values of $\sigma^2>0$ (see Figure \ref{fig:mass}), the rate of convergence decreases. }

\subsection{Extension to 2D}
In this section we explore numerically the convergence towards equilibrium of the 2D surrogate Fokker-Planck problem
\be
\label{eq:2D_epsi}
\partial_tf_\e(\x,t) = \nabla_\x \cdot \left[ (\x-\x_0)f_\e(\x,t) + \nabla_\x (\kappa_\epsilon(\x)f_\e(\x,t) )\right],
\ee
where  $\kappa_\epsilon(\x)$ is a suitable regularization of the radially symmetric discontinuous diffusion function \fer{eq:k} in 2D. 

In particular we will consider the radially symmetric regularization defined in \eqref{eq:kepsi}. For the 2D case, an extension of the SP scheme can be defined  based on a dimensional splitting approach. In the description of the scheme we avoid $\epsilon$ subscript of the density.  In particular, we introduce a uniform mesh $(x_i,y_j) \in [-L,L]\times [-L,L] \subset \mathbb R^2$ with $\Delta x = x_{i+1}-x_i>0$, $\Delta y = y_{i+1}-y_i>0$. Denoting with $f_{i,j}(t)$ the approximation of $f(x_i,y_j,t)$ we consider the discretization
\[
\dfrac{d}{dt} f_{i,j}(t) = \dfrac{\mathcal F_{i+1/2,j}[f]- \mathcal F_{i-1/2,j}[f]}{\Delta x} + \dfrac{\mathcal F_{i,j+1/2}[f]- \mathcal F_{i,j-1/2}[f]}{\Delta y}
\]
 where
 \[
 \begin{split}
 \mathcal F_{i+1/2,j}[f] &= \tilde{\mathcal C}_{i+1/2,j}\tilde{f}_{i+1/2,j} + \kappa_{\epsilon}(x_{i+1/2},y_j )\dfrac{f_{i+1,j}-f_{i,j}}{\Delta x} \\
  \mathcal F_{i,j+1/2}[f] &= \tilde{\mathcal C}_{i,j+1/2}\tilde{f}_{i,j+1/2} + \kappa_{\epsilon}(x_i,y_{j+1/2}) \dfrac{f_{i,j+1}-f_{i,j}}{\Delta y}
 \end{split}
 \]
and we set 
  \[
 \begin{split}
 \tilde{\mathcal C}_{i+1/2,j} &= \dfrac{\kappa_{\epsilon}(x_{i+1/2},y_j)}{\Delta x}\int_{x_i}^{x_{i+1}}\dfrac{(\mathbf x-\mathbf x_0) + \partial_x \kappa_\epsilon(x,y_j)}{\kappa_\epsilon(x,y)}dx \\
  \tilde{\mathcal C}_{i,j+1/2} &= \dfrac{\kappa_{\epsilon}(x_i,y_{j+1/2})}{\Delta y}\int_{y_j}^{y_{j+1}}\dfrac{(\mathbf x-\mathbf x_0) + \partial_y \kappa_\epsilon(x_i,y)}{\kappa_\epsilon(x,y)}dy
   \end{split}
 \]
  \[
 \begin{split}
 \tilde{f}_{i+1/2,j} &= (1-\delta_{i+1/2,j})f_{i+1,j} + \delta_{i+1/2,j}f_{i,j}\\
  \tilde{f}_{i,j+1/2} &= (1-\delta_{i,j+1/2})f_{i,j+1} + \delta_{i,j+1/2}f_{i,j}
  \end{split}
 \]
 with nonlinear weights defined as follows
 \[
 \begin{split}
 \delta_{i+1/2,j} &= \dfrac{1}{\lambda_{i+1/2,j}} + \dfrac{1}{1-\exp(\lambda_{i+1/2,j})} \\
  \delta_{i,j+1/2} &= \dfrac{1}{\lambda_{i,j+1/2}} + \dfrac{1}{1-\exp(\lambda_{i,j+1/2})}
 \end{split}
 \]
 being $\lambda_{i+1/2,j} = \dfrac{\Delta x \, \tilde{\mathcal C}_{i+1/2,j}}{\kappa_{\epsilon}(x_{i+1/2},y_j)}$ and  $\lambda_{i,j+1/2} = \dfrac{\Delta y \, \tilde{\mathcal C}_{i,j+1/2}}{\kappa_{\epsilon}(x_i,y_{j+1/2})}$.

We consider as initial distribution 
\begin{equation}
\label{eq:f0_2D}
\begin{split}
f(\x,0) =   \dfrac{1}{8\pi \theta^2}\exp\left\{ -\dfrac{(x - \mu_x)^2}{2\theta^2} -\dfrac{(y - \mu_y)^2}{2\theta^2} \right\} + \dfrac{1}{8\pi \theta^2}\exp\left\{ -\dfrac{(x - \mu_x)^2}{2\theta^2} -\dfrac{(y + \mu_y)^2}{2\theta^2}\right\} \\
+\dfrac{1}{8\pi \theta^2}\exp\left\{ -\dfrac{(x + \mu_x)^2}{2\theta^2} -\dfrac{(y - \mu_y)^2}{2\theta^2} \right\} + \dfrac{1}{8\pi \theta^2}\exp\left\{ -\dfrac{(x + \mu_x)^2}{2\theta^2} -\dfrac{(y + \mu_y)^2}{2\theta^2}\right\}
\end{split}\end{equation}
with $\mu_x=3$, $\mu_y = -3$, $\theta^2 = 0.2$. We introduce a uniform discretization of the domain $[-L,L] \times [-L,L] $, $L = 5$, obtained with $N_x = N_y$ equally spaced gridpoints such that $\Delta x = \Delta y = 2L/(N-1)$. The time integration is performed through a standard RK4 method with $\Delta t = \Delta x^2/L^2$. The evolution of the density $f_\epsilon$ is presented in Figure \ref{fig:f_evo} where we considered $\epsilon = 0.05,0.5$ and we represent the 2D distribution at times $t = 1$ and $t = 10$. 

In Figure \ref{fig:surf} we represent the large time  solution of the surrogate model \eqref{eq:2D_epsi} for several $\epsilon = 0.05,0.1,0.25,0.5$ compared with the equilibrium distribution $f^\infty(\x)$ in \eqref{eq:eq_mD} with $\sigma^2 = 0.2555$, $\delta = 1$ and $\x_0 = (0,0)$. Furthermore, in Figure \ref{fig:2D2} (left plot) we represent the obtained marginal distributions at time $T = 10$, i.e. $\int_{\mathbb R}f_\epsilon(\x,T)dx$, compared with the analytical marginal distribution $\int_{\mathbb R}f^\infty(\x)dx$. We may easily observe how for small $\epsilon>0$ the large time solution of the surrogate problem is capable to approximate the analytical one. 

\begin{figure}
\centering
\subfigure[$t = 1,\epsilon = 0.5$]{\includegraphics[scale = 0.36]{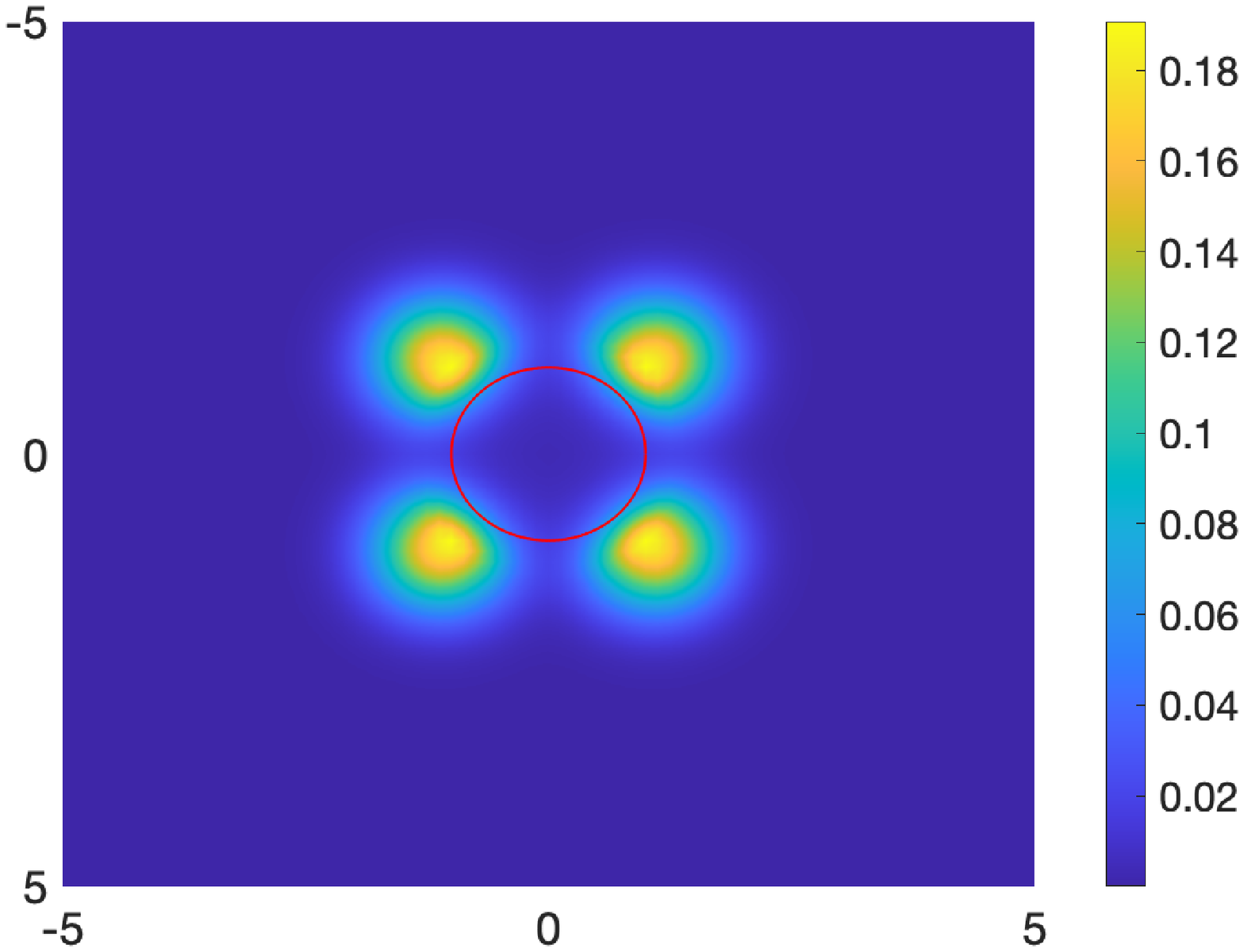}}
\subfigure[$t = 10,\epsilon = 0.5$]{\includegraphics[scale = 0.36]{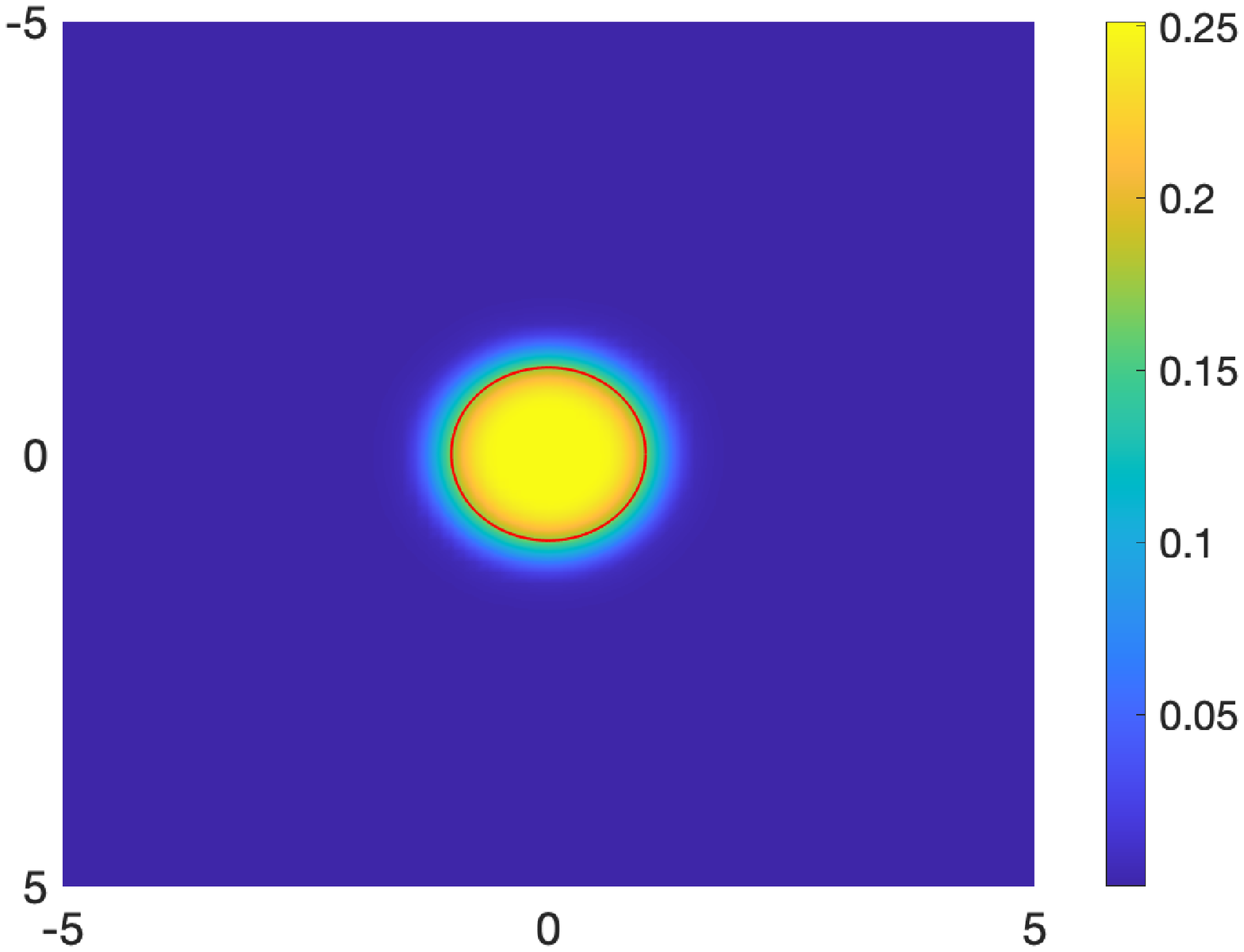}}\\
\subfigure[$t = 1,\epsilon = 0.05$]{\includegraphics[scale = 0.36]{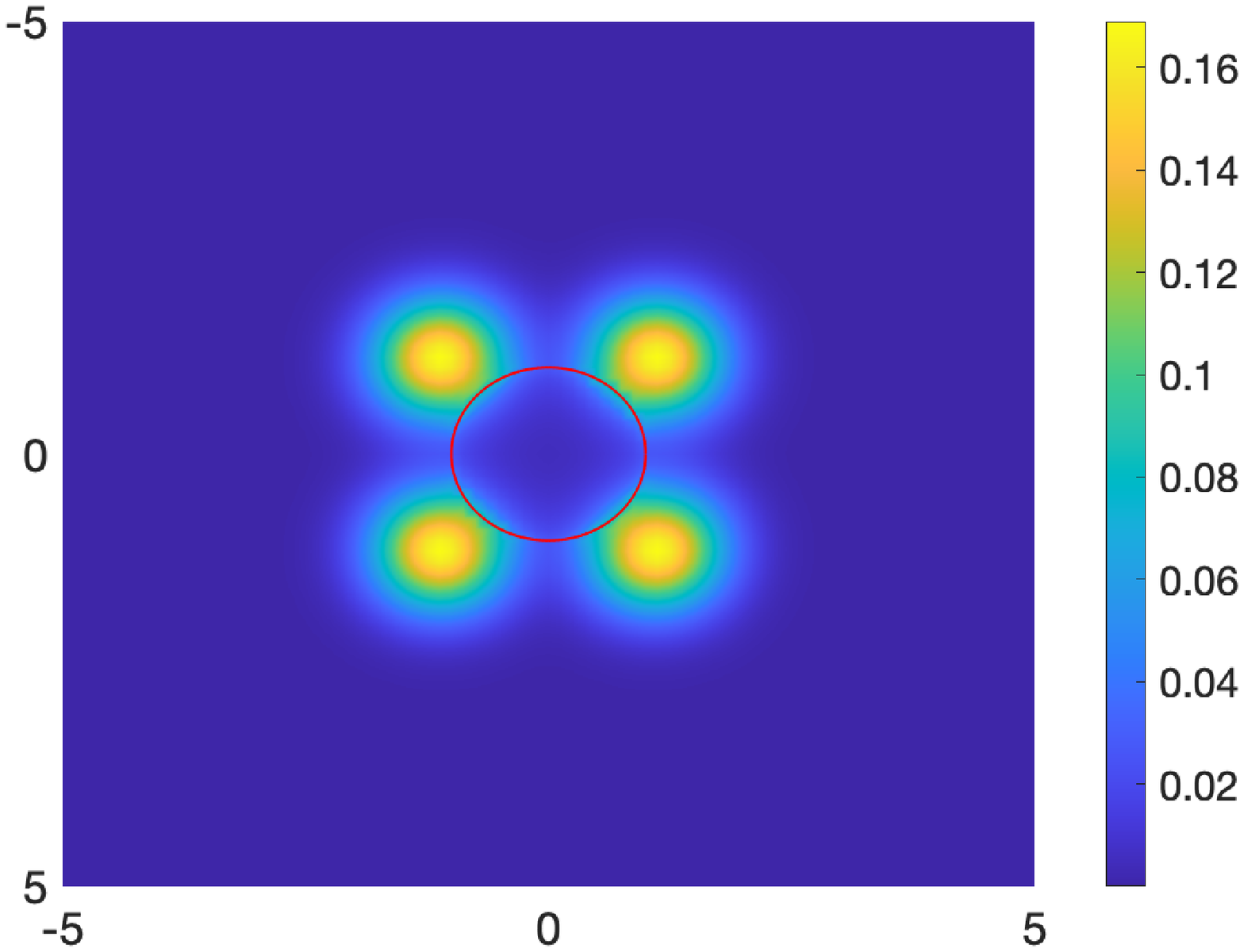}}
\subfigure[$t = 10,\epsilon = 0.05$]{\includegraphics[scale = 0.36]{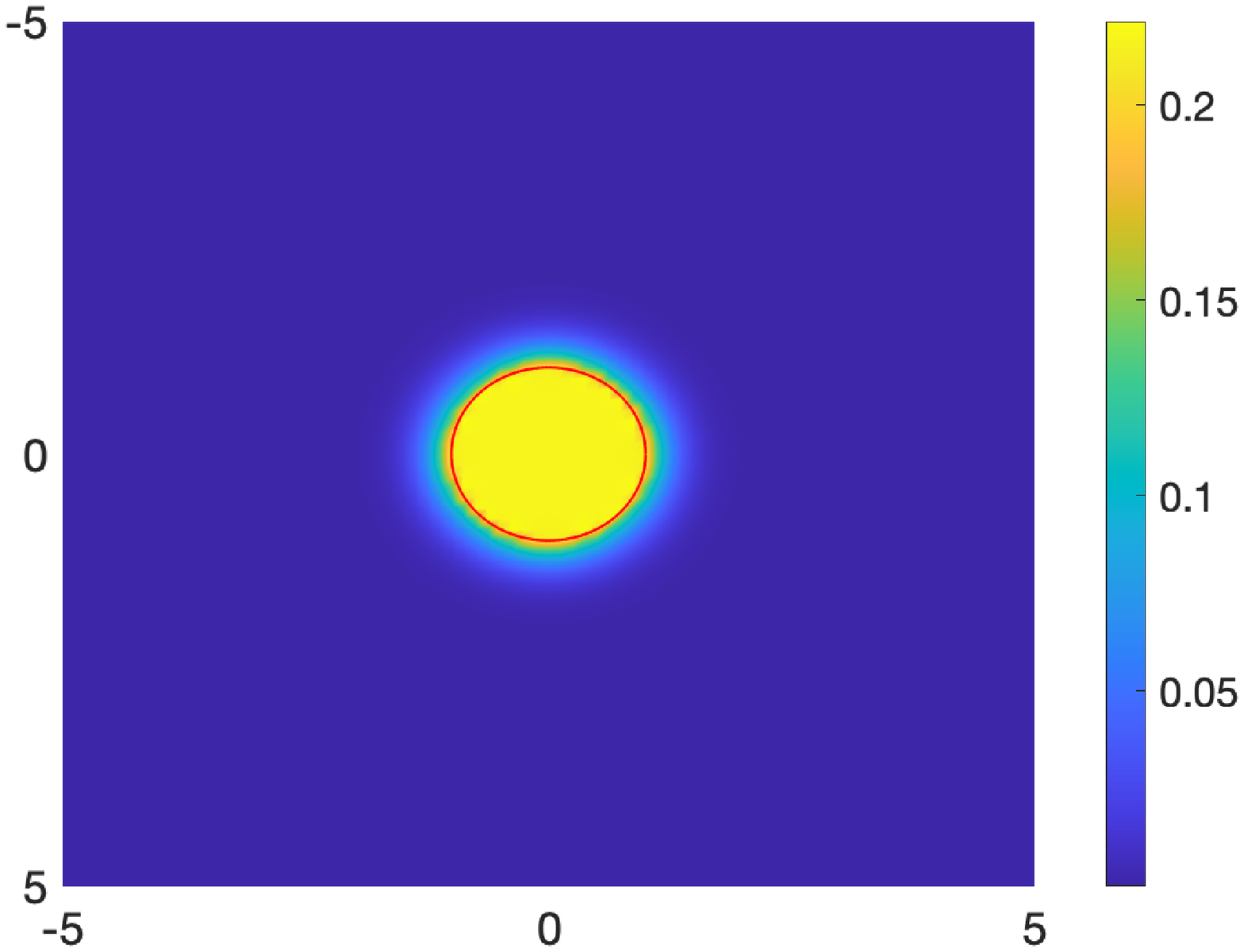}}
\caption{Evolution of the numerical large time solution of problem \eqref{eq:2D_epsi}  for several $\epsilon= 0.5$ (top row) and $\epsilon = 0.05$ (bottom row). We considered $[-L,L]^2$, $L=5$, and a discretization of $N_x = N_y = 81$ gridpoints in each dimension, $\Delta t = \Delta x/L^2$. We visualize the target domain $D = \{\mathbf{x} \in \mathbb{R}^2: |\mathbf{x} - \mathbf{x}_0|\le \delta\}$, $\delta = 1$, $\mathbf{x}_0= 0$ in red.  }
\label{fig:f_evo}
\end{figure}

\begin{figure}
\centering
\subfigure[$\epsilon = 0.5$]{\includegraphics[scale = 0.36]{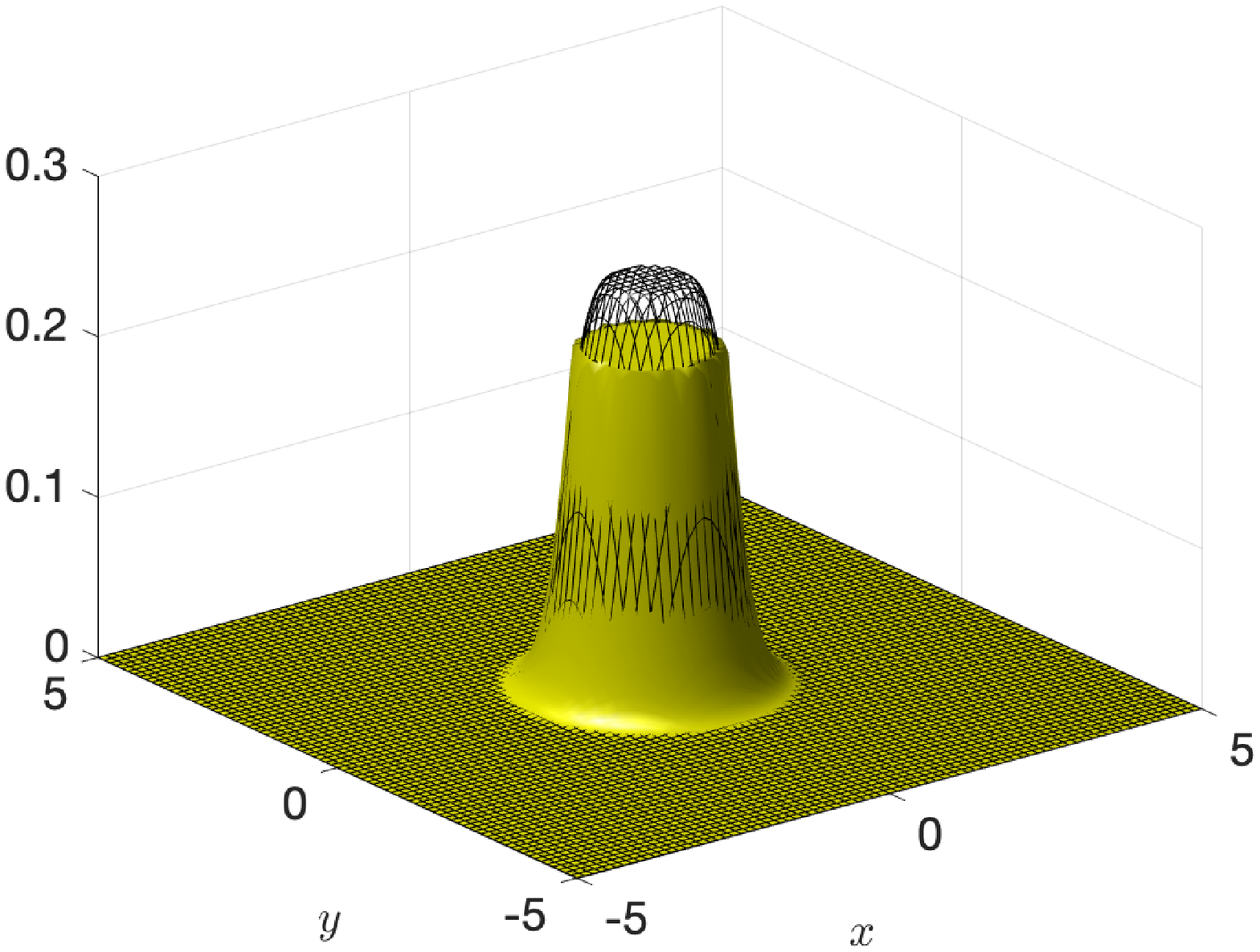}}
\subfigure[$\epsilon = 0.25$]{\includegraphics[scale = 0.36]{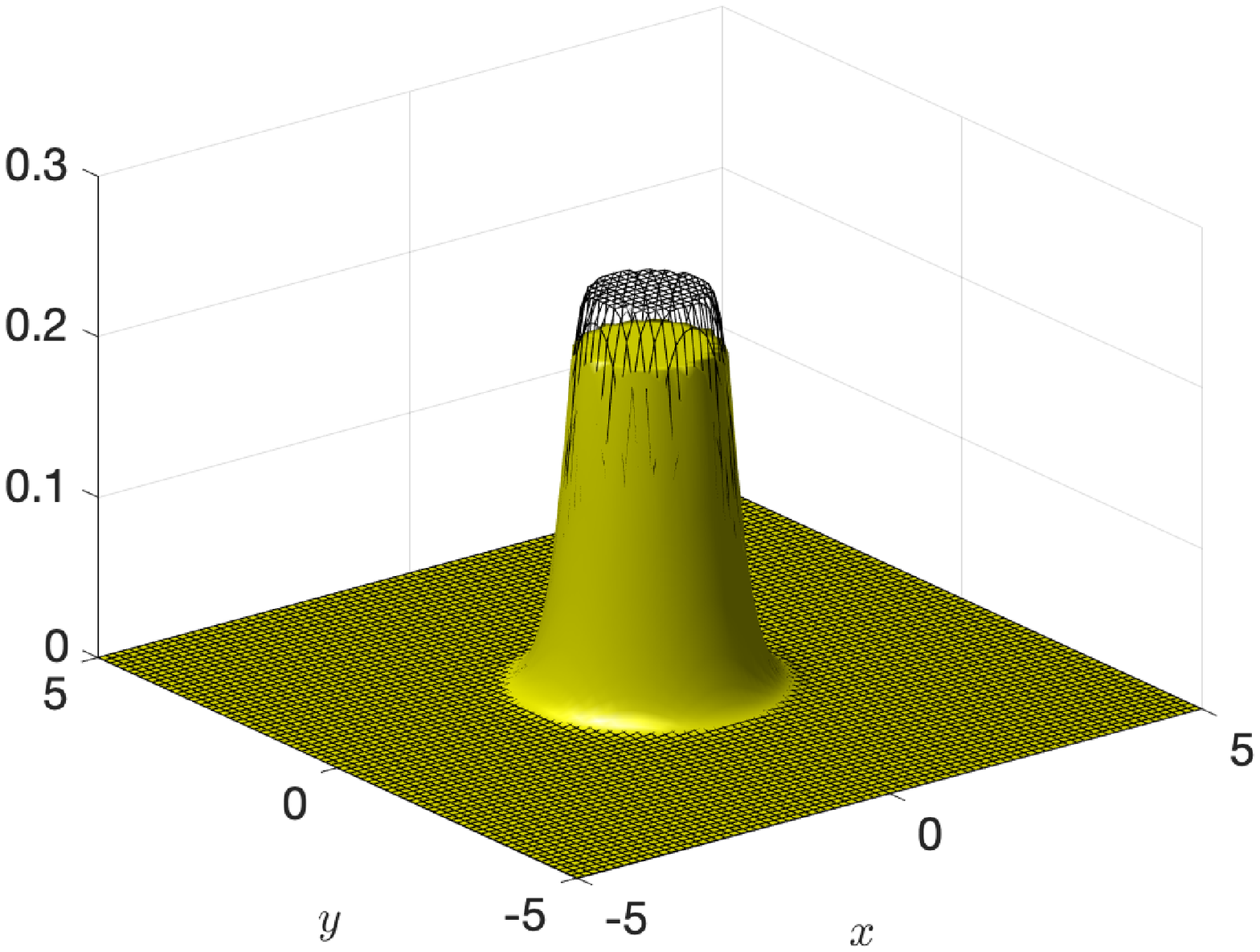}}\\
\subfigure[$\epsilon = 0.1$]{\includegraphics[scale = 0.36]{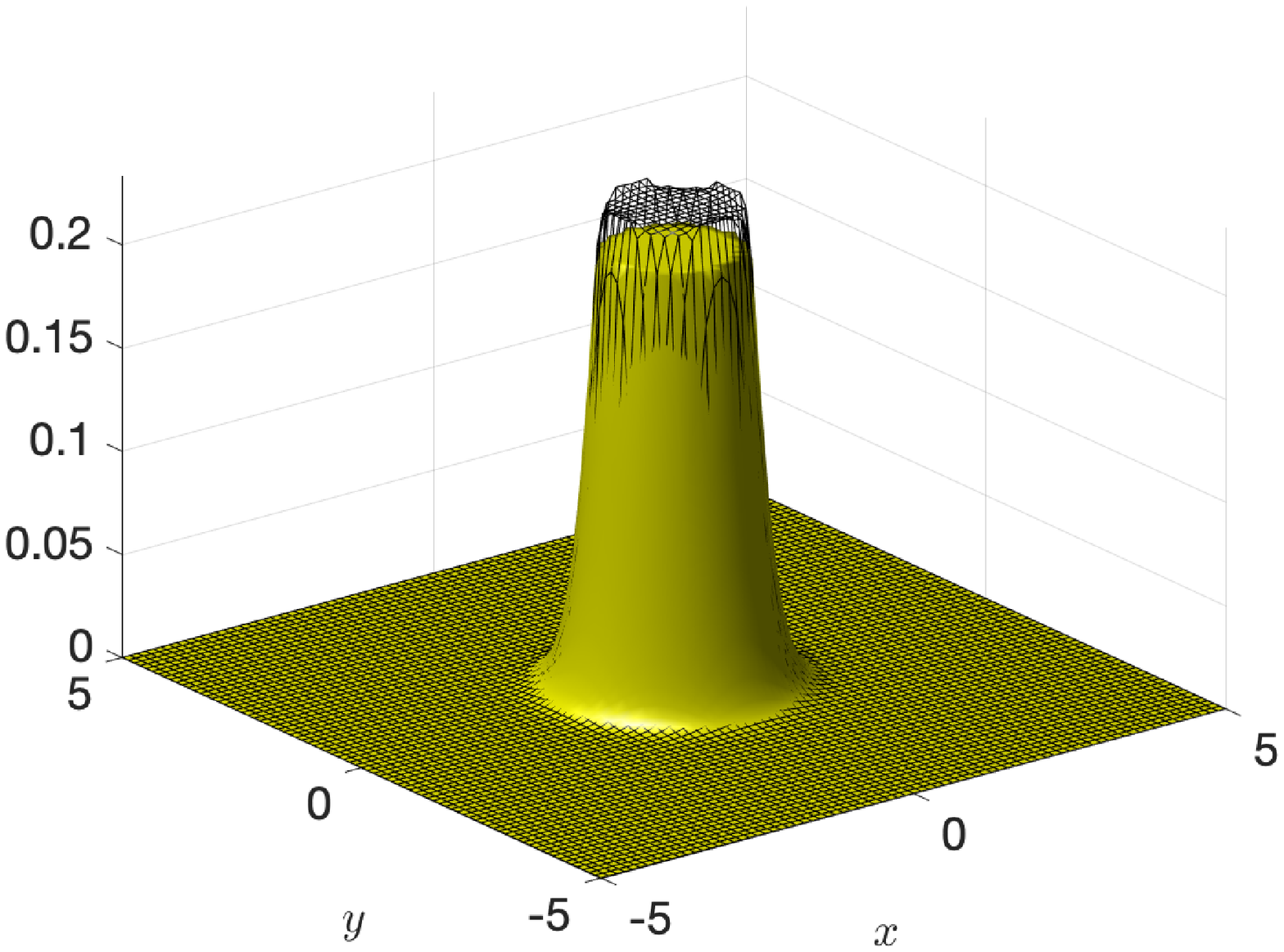}}
\subfigure[$\epsilon = 0.05$]{\includegraphics[scale = 0.36]{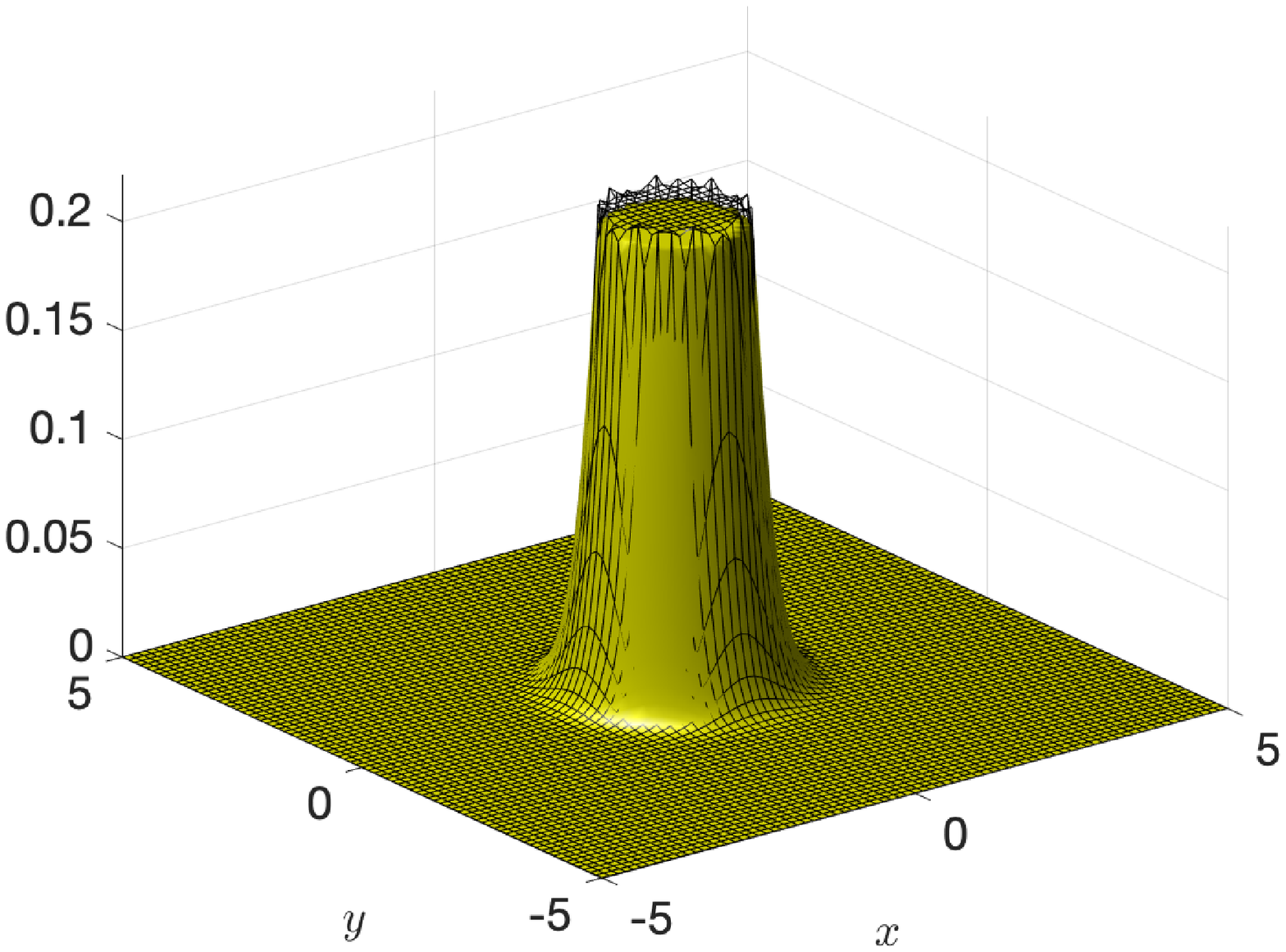}}
\caption{Comparison of the numerical large time solution of problem \eqref{eq:2D_epsi}  for several $\epsilon >0$ with the analytical one defined in \eqref{eq:equi2}. We considered $[-L,L]^2$, $L=5$, and a discretization of $N_x = N_y = 81$ gridpoints in each dimension, $\Delta t = \Delta x/L^2$. We visualize the numerical solution through the black mesh at time $T = 10$ and the analytical equilibrium with the yellow surface.  }
\label{fig:surf}
\end{figure}

Finally, in Figure \ref{fig:2D2} (right plot) we compute the evolution of the 2D Hellinger's distance 
\be
\label{eq:dH2D}
d_H^2(f,g)(t) = \int_{\mathbb R^2} (\sqrt{f(\x,t)} - \sqrt{g(\x)})^2d\x. 
\ee
for several $\epsilon>0$. The distance $d_H$ has been computed with respect to the reference large time solution of \eqref{eq:2D_epsi} with $N = 641$ gridpoints in both space dimensions at time $T = 50$. Hence, we considered $g = \bar f_\epsilon(\mathbf{x},T)$ computed on the introduced refined grid. 

\begin{figure}
\centering
\includegraphics[scale =0.36]{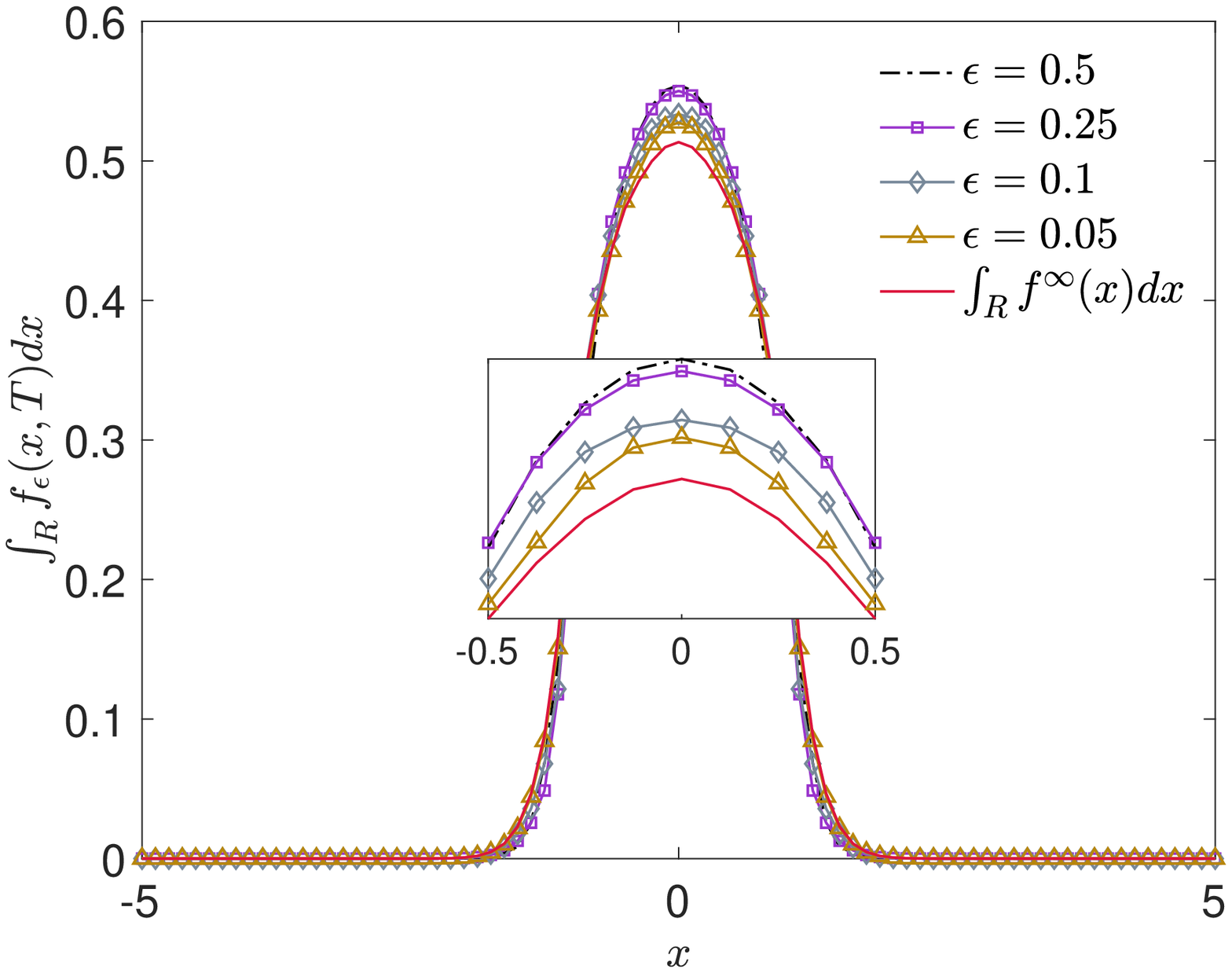}
\includegraphics[scale =0.36]{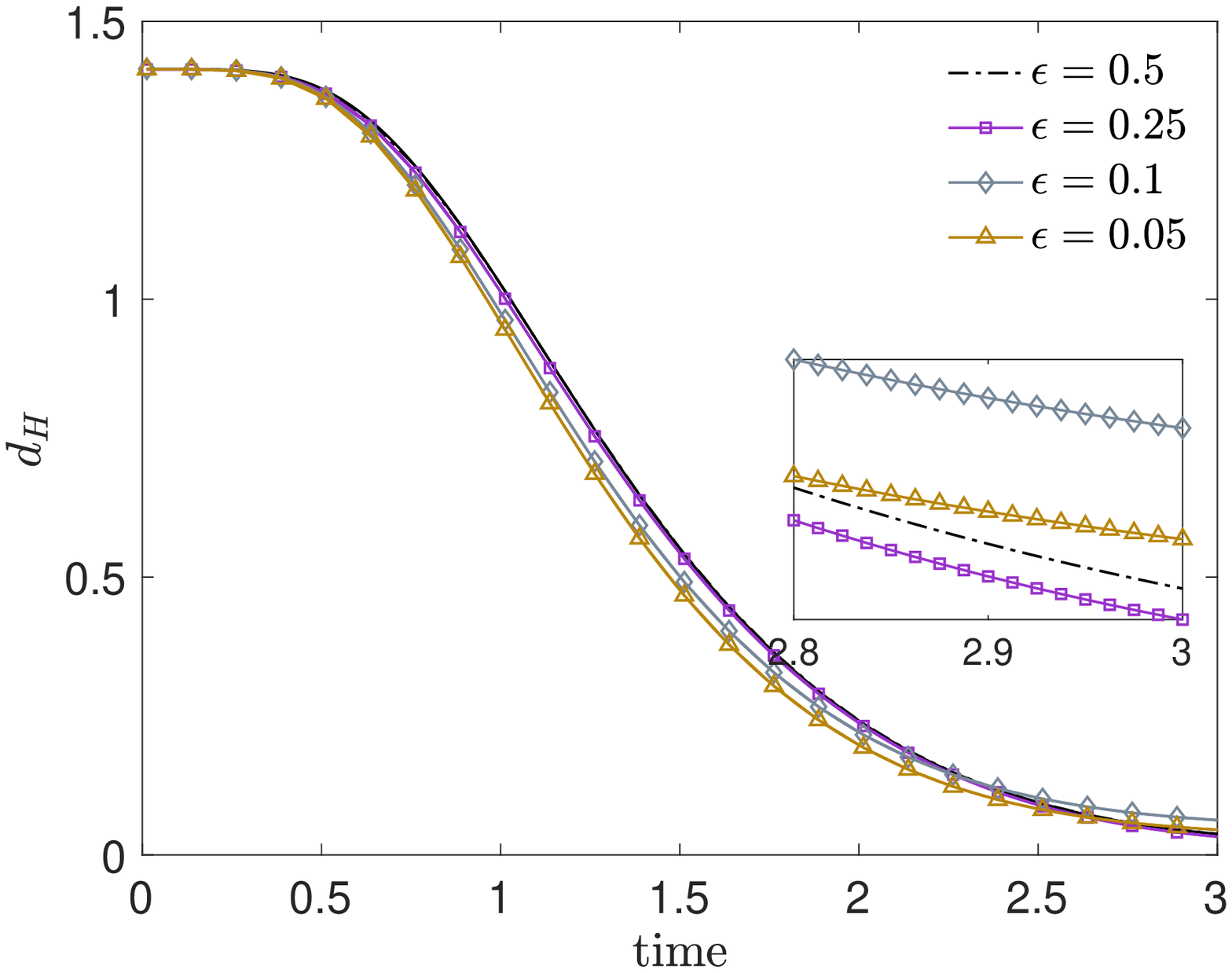}
\caption{Left: comparison of the marginal of $f^\infty(\x)$ with the marginals $\int_{\mathbb R}f_\epsilon(\x,T)dx$ obtained from the numerical solution of \eqref{eq:2D_epsi} at time $T = 10$. Right: evolution of the Hellinger's distance \eqref{eq:dH2D} computed through a reference large times solution produced with $N_x = N_y = 641$ at time $T = 10$. The positivity of the solution is guaranteed by keeping $\Delta t = \Delta x^2/L^2$. }
\label{fig:2D2}
\end{figure}

\subsection{\rev{Convergence of the particles' system}}
\rev{
In this section we investigate the convergence of a system of particles whose distribution corresponds to the introduced surrogate Fokker-Planck equation with the regularized diffusion weight $\kappa_\epsilon(x)$ defined in \eqref{eq:kepsi}.  In particular we consider the a system of SDEs describing the particles' position $\x_i(t)$ given by
\begin{equation}
\label{eq:SDE}
d\x_i(t) = (\x_0 - \x_i)dt + \sqrt{2\kappa_\epsilon(\x)}d\textbf{W}_i^t, 
\end{equation}
where $i = 1,\dots,N$  
In \eqref{eq:SDE} we denoted by $\{\textbf{W}_i^t\}_{i=1}^N$ a vector of $N$ independent $d$-dimensional Wiener processes. We recall that the transition to chaos for the particle system \eqref{eq:SDE} follows from the arguments in \cite{BCC,CFRT,CFTV}. We evolve over the time interval $[0,T]$, $T = 5$, the introduced system by means of the Euler-Maruyama method with $\Delta t = 10^{-2}$. We are interested in the approximation of the large time behaviour of the particles' distribution that is here reconstructed by means of a simple histogram in the domain $[-5,5]$ discretized with $N_x = 101$ gridpoints.  In the following we will fix a mass fraction $m_2 = 0.8$, meaning that, on average, $80\%$ of the particles must lie inside the domain $D$.  In Figure \ref{fig:part} we depict the reconstructed distribution of the particles' system \eqref{eq:SDE} of size $N = 10^5$ for two choices of $\epsilon = 0.05,0.5$. The initial positions of the particles is sampled from \eqref{eq:f0_num}. We may observe that for decreasing values of $\epsilon$ we approximate correctly the theoretical distribution $f^\infty(x)$ defined in  \eqref{eq:finf} of the domain $D = \{x \in \mathbb R: |x|<\frac{1}{2}\}$ and relative to the choice $m_2 = 0.8$. The parameters characterizing diffusion and continuity at the interface, i.e. $\sigma^2>0$ and $m_1>0$, have been determined as solution to \eqref{eq:sys2}. In the right plot of Figure \ref{fig:part} we present the evolution of the number of particles in $D$ for several values of $\epsilon>0$. As expected, for large times and for vanishing values of $\epsilon>0$, the number of particles approximate the theoretical level $m_2 = 0.8$, meaning that the effective number of particles in $D$ coincides with the theoretical one. 
\begin{figure}
\centering
\includegraphics[scale = 0.36]{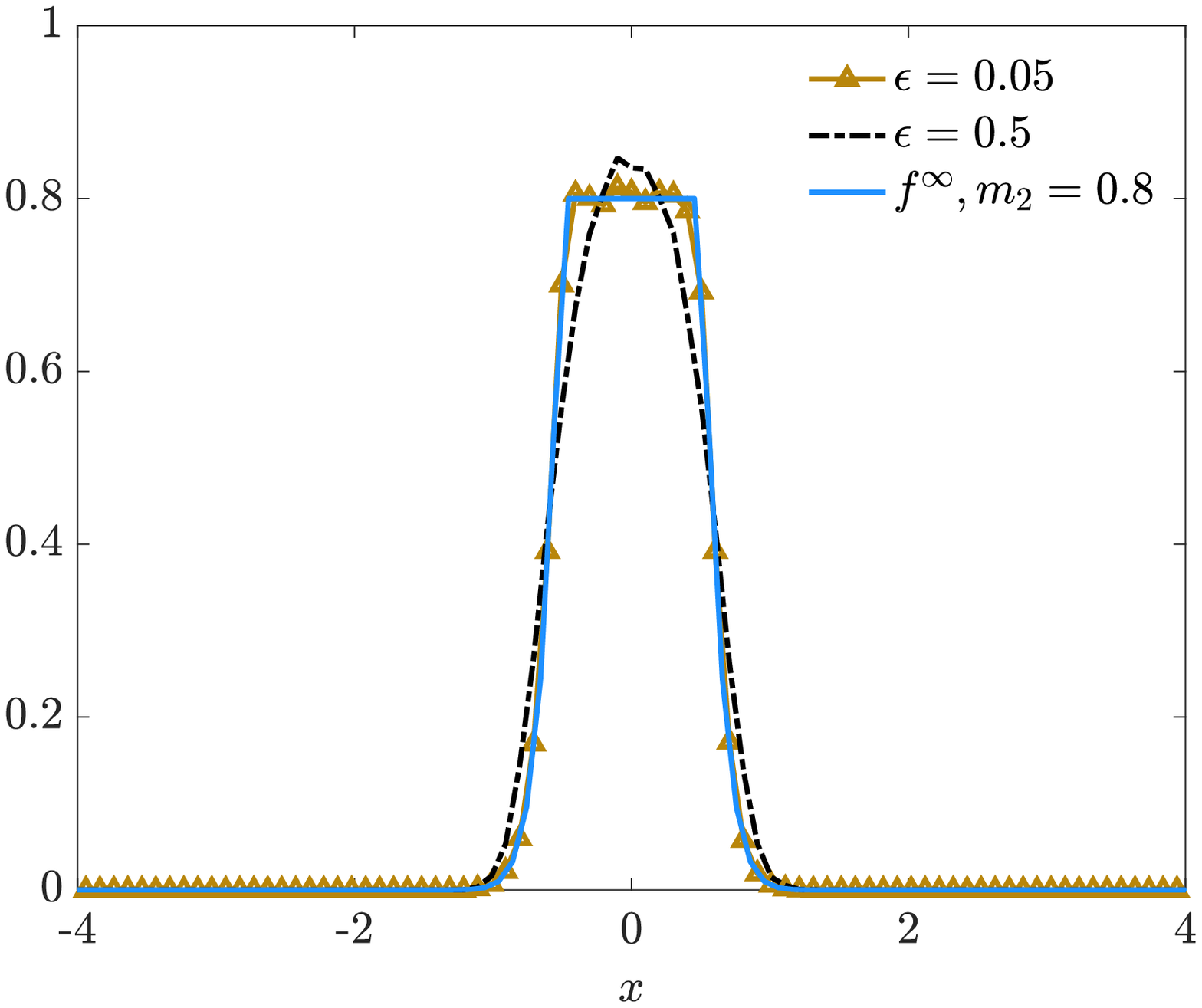}
\includegraphics[scale = 0.36]{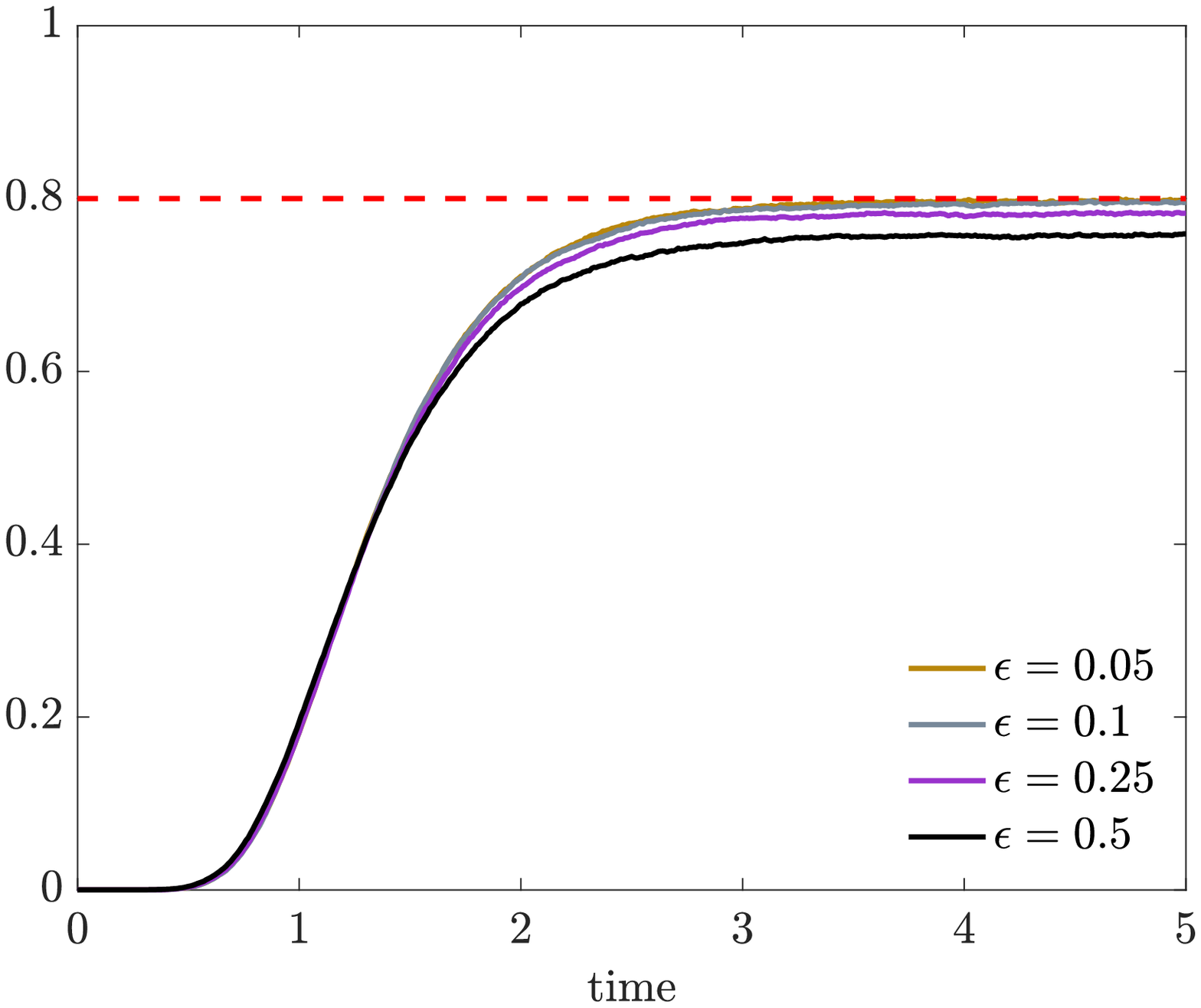}
\caption{\rev{Left: comparison between the theoretical $f^\infty(x)$ for $m_2 = 0.8$ and the density reconstruction of the particles' system \eqref{eq:SDE} in 1D at time $T = 5$, $N = 10^5$, where $D = \{x \in \mathbb R: |x|<\frac{1}{2}\}$. We considered two values of $\epsilon = 0.05,0.5$ and the regularization of the diffusion function $\kappa_\epsilon(x)$ in \eqref{eq:kepsi}. Right: evolution of the mass fraction in $D$ for several $\epsilon>0$. }}
\label{fig:part}
\end{figure}
Analogous results are obtained in the 2D case. In this direction, in Figure \ref{fig:part_2D} we consider a 2D particles' system of the form \eqref{eq:SDE} composed by $N = 10^3$ particles. We assume $D = \{\x \in \mathbb R^2: |\x|\le \frac{1}{2}\}$ and  $m_2 = 0.8$. The parameters $\sigma^2,m_1>0$ have been determined as solution to \eqref{eq:prob_2D}. The evolution of the system is presented for $t = 0,1,2,4$ and has been obtained with an Euler-Maruyama method with $\Delta t = 10^{-2}$. The initial positions have been sampled from \eqref{eq:f0_2D} with $\mu_x = -\mu_y = 3$, $\theta^2 = 0.2$. 
\begin{figure}
\centering
\includegraphics[scale = 0.36]{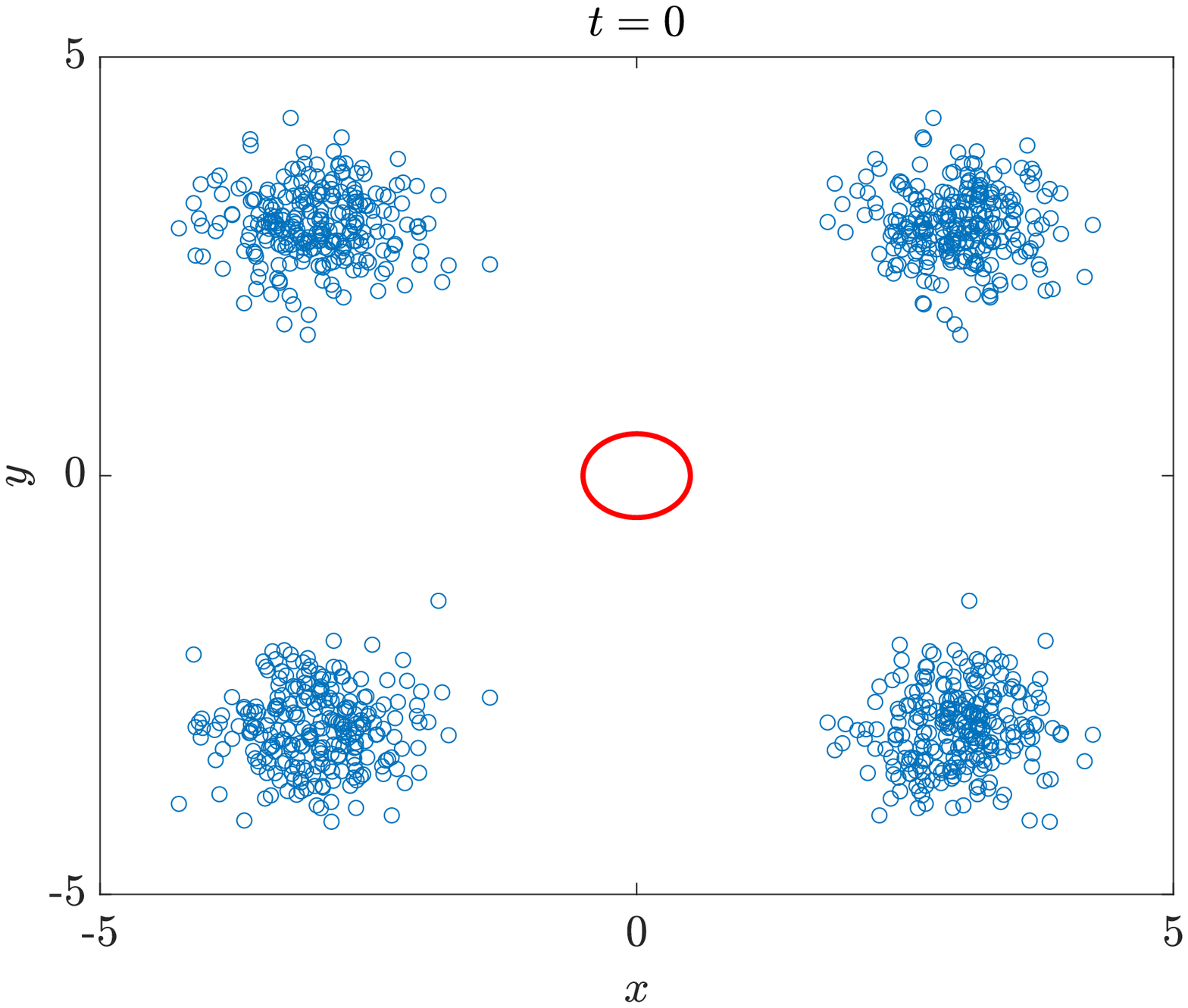}
\includegraphics[scale = 0.36]{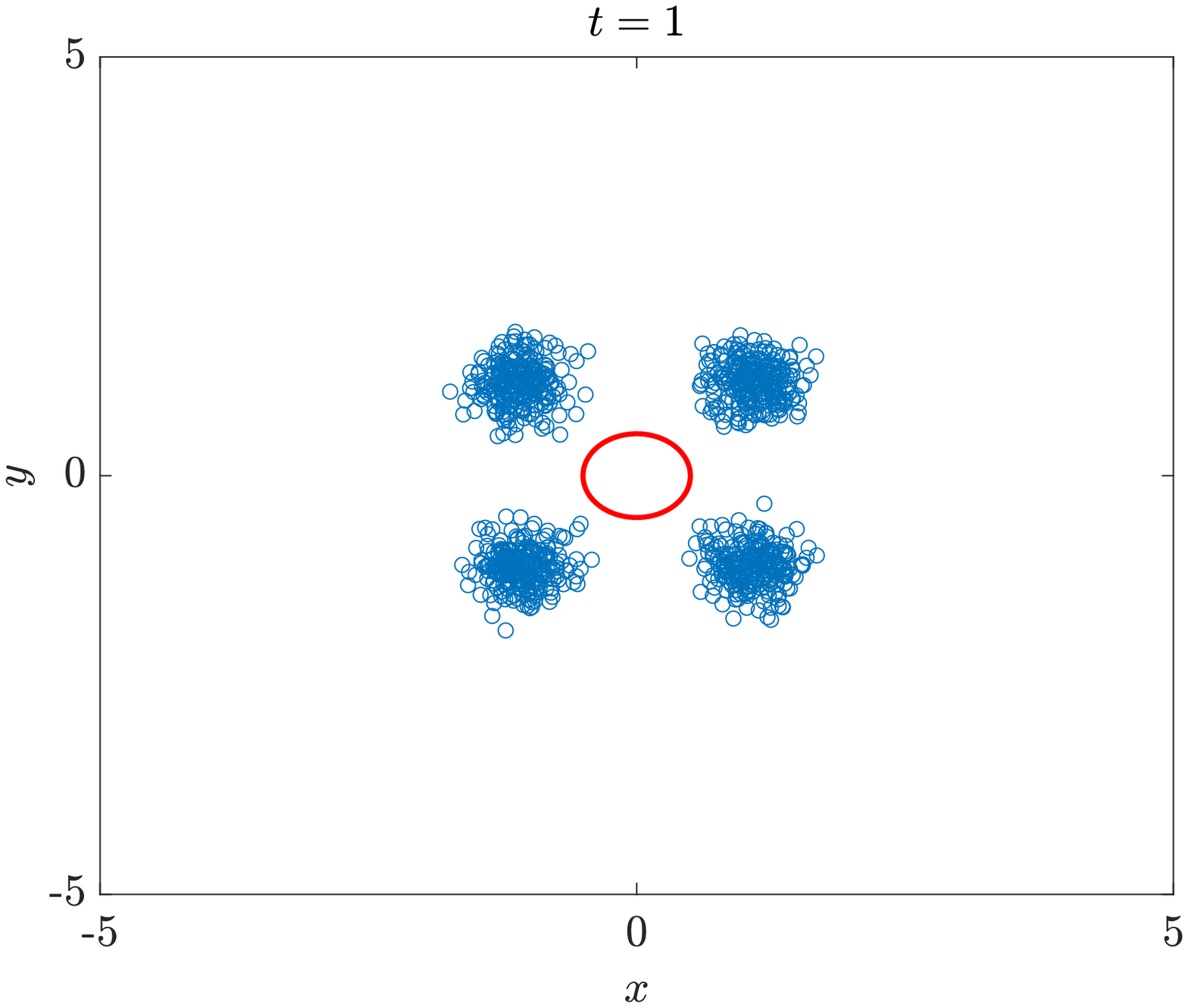}\\
\includegraphics[scale = 0.36]{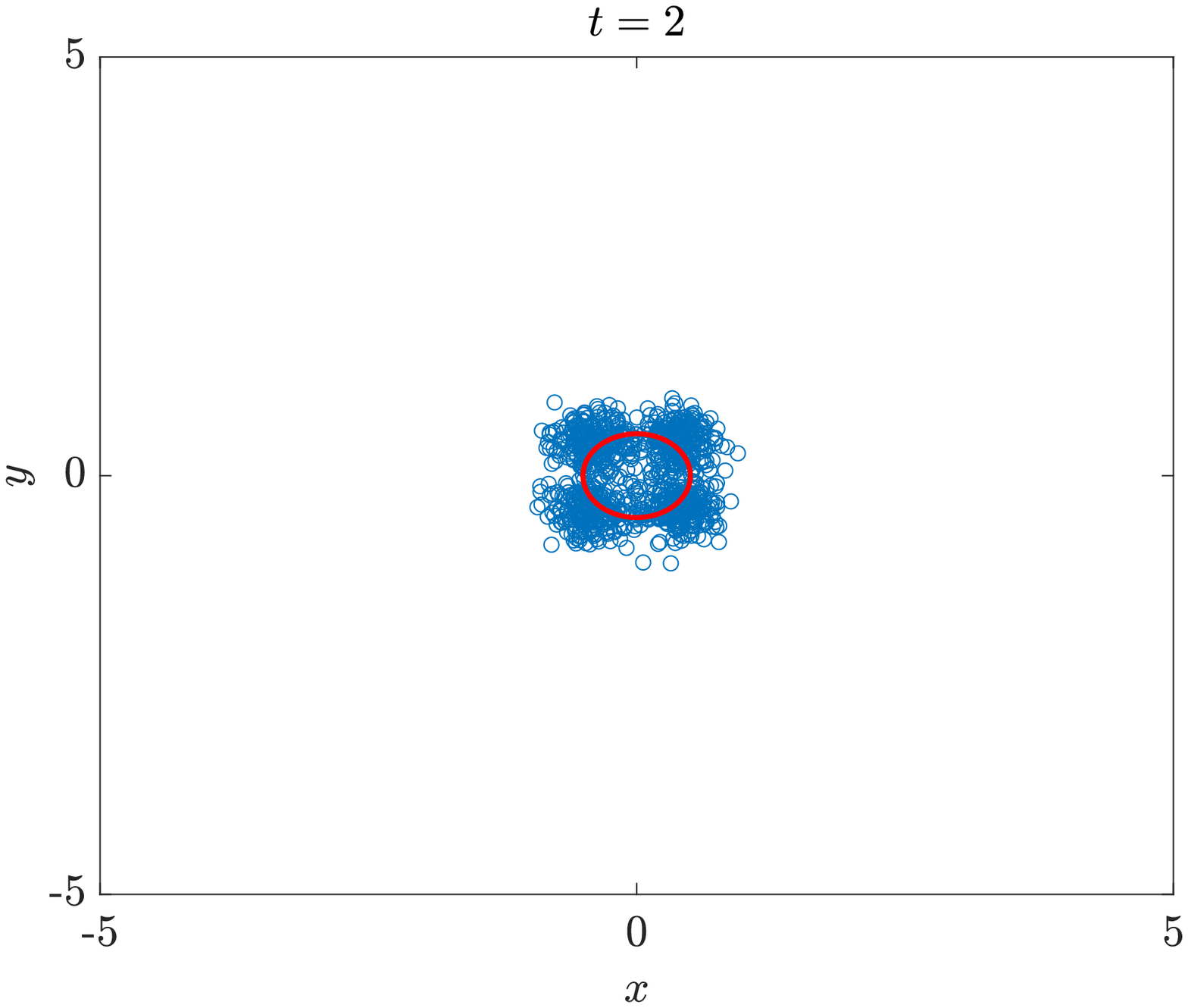}
\includegraphics[scale = 0.36]{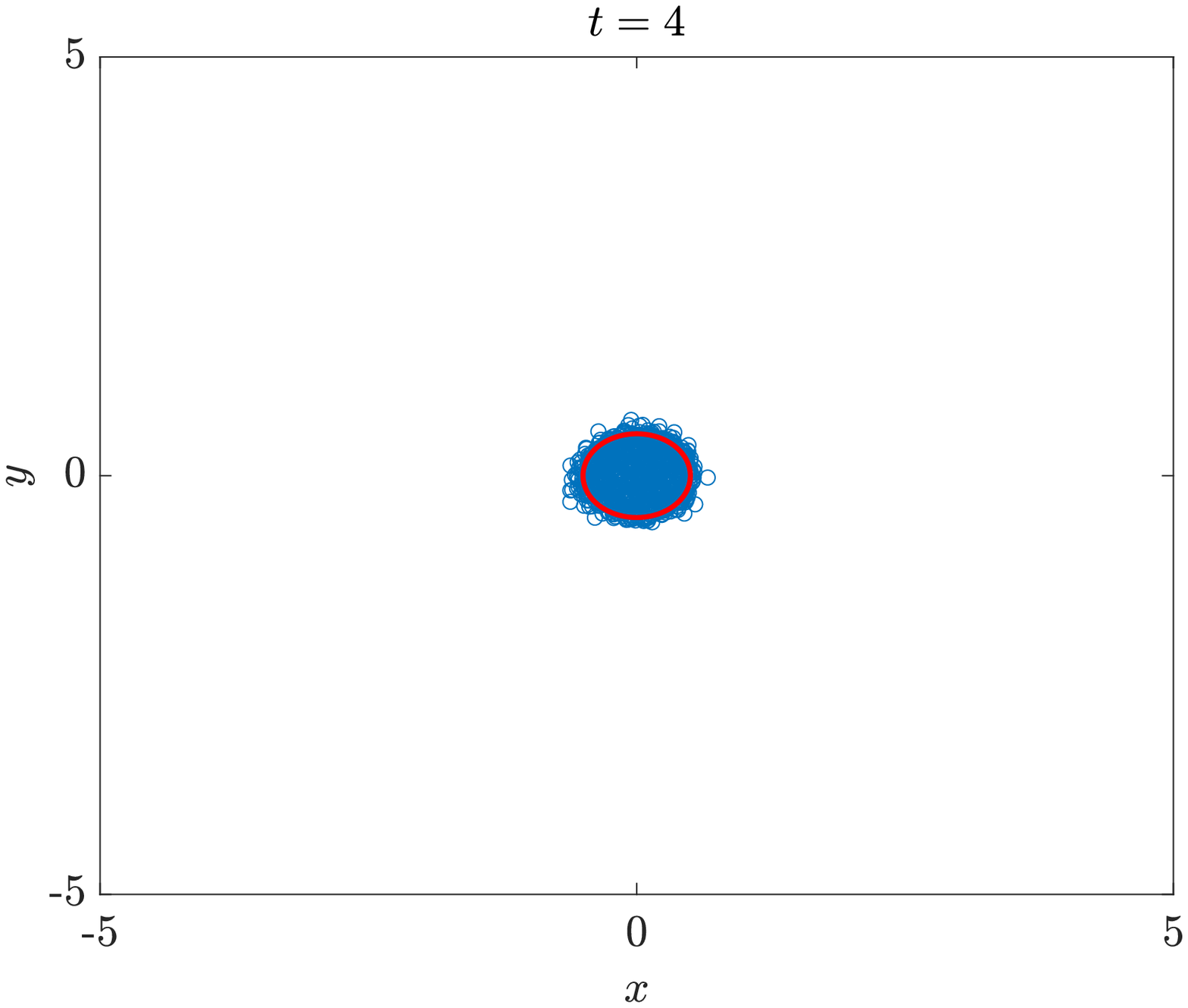}
\caption{\rev{Evolution of the particles' system \eqref{eq:SDE} in the 2D case and for $N = 10^3$ particles in the case $m_2 = 0.8$. We highlight in red the domain $D = \{\x \in \mathbb R^2:|\x| \le \frac{1}{2}\}$. The initial positions are sampled from \eqref{eq:f0_2D} .}}
\label{fig:part_2D}
\end{figure}
}

\section*{Conclusions}
In the present work, we introduced and studied  Fokker-Planck-type models suitable to describe the action of a large swarm on a fixed domain $D \subseteq \mathbb R^d$, with $d \ge 1$, being the main task of the swarm  to spread uniformly over its surface. This goal has be  obtained by  resorting  to a  description in terms of Fokker--Planck type equations with a  drift derived from a strongly convex quadratic potential, and a diffusion term with a variable coefficient of diffusion. The idea is that the drift  drives the particles towards  the center of the target domain $D$ and, as soon as it enters in $D$, each particle of the swarm starts a random exploration of the domain, with a speed that depends on its distance from the boundary. Recent mathematical results concerned with this type of Fokker--Planck equations allows to rigorously prove that the explicit asymptotic distribution profile,  a weighted combination of a uniform distribution inside $D$ and a Gaussian distribution in $D\setminus\mathbb R^d$, is reached in time at a polynomial rate. It is interesting to remark that the classical description in terms of the Fokker--Planck equation \fer{eq:model} with constant coefficient of diffusion and a drift which is derived by a potential that is not strongly convex, while leading to the same asymptotic profile, does not allow to use known mathematical results which ensure convergence of the solution towards equilibrium. A numerical approximation of the underlying Fokker-Planck equation in one and two dimensions allows to verify that convergence to the stationary solution holds at a certain polynomial rate, thus confirming the theoretical analysis. \rev{Several extensions of the present approach, which include weaker drift functions and dynamics on manifolds, are currently under study and will be presented in future works.  }

\section*{Acknowledgements}
This work has been written within the activities of the GNFM group of INdAM (National Institute of High Mathematics). M.Z. acknowledges partial support of MUR-PRIN2020 Project No. 2020JLWP23. The research of M.Z. was partially supported by MIUR, Dipartimenti di Eccellenza Program (2018–2022), and Department of Mathematics “F. Casorati”, University of Pavia.

\end{document}